\documentclass[submission,copyright,creativecommons]{eptcs}
 \usepackage{amssymb,nicefrac}
\usepackage{amsfonts}
\usepackage{graphicx}
\usepackage[fleqn]{amsmath}
\usepackage[varg]{txfonts}
\usepackage{stmaryrd}
\usepackage{framed}
\usepackage{color}
\usepackage{tikzfig}
\usepackage{placeins}
\usepackage[normalem]{ulem} 
\usepackage{mathtools} 
\DeclarePairedDelimiter\bra{\langle}{\rvert}
\DeclarePairedDelimiter\ket{\lvert}{\rangle}

\newenvironment{proof}{\textbf{Proof:}}{\hfill$\Box$\newline}





%
















\tikzstyle{env}=[copoint,regular polygon rotate=0,minimum width=0.2cm, fill=black]

%
%


\tikzstyle{every picture}=[baseline=-0.25em]
\tikzstyle{dotpic}=[scale=0.5]
\tikzstyle{diredges}=[every to/.style={diredge}]
\tikzstyle{dot graph}=[shorten <=-0.1mm,shorten >=-0.1mm,scale=0.6]
\tikzstyle{plot point}=[circle,fill=black,minimum width=2mm,inner sep=0]


\tikzstyle{braceedge}=[decorate,decoration={brace,amplitude=2mm,raise=-1mm}]
\tikzstyle{small braceedge}=[decorate,decoration={brace,amplitude=1mm,raise=-1mm}]
\tikzstyle{left hook arrow}=[left hook-latex]
\tikzstyle{right hook arrow}=[right hook-latex]

\tikzstyle{dtriangle}=[fill=yellow,draw=black,shape=isosceles triangle,shape border rotate=-90,isosceles triangle stretches=true,inner sep=0.8pt,minimum width=0.25cm,minimum height=2mm]
\tikzstyle{vtriang}=[fill=yellow,draw=black,shape=isosceles triangle,shape border rotate=180,isosceles triangle stretches=true,inner sep=0.8pt,minimum width=0.25cm,minimum height=2mm]
\tikzstyle{vrt}=[fill=yellow,draw=black,shape=isosceles triangle,shape border rotate=0,isosceles triangle stretches=true,inner sep=0.8pt,minimum width=0.25cm,minimum height=2mm]
\tikzstyle{H box}=[rectangle,fill=yellow,draw=black,xscale=0.8,yscale=0.8, inner sep=0.6pt]
\tikzstyle{gbox}=[rectangle,fill=green,draw=black,xscale=1.0,yscale=1.0, inner sep=1.pt]
\tikzstyle{rbox}=[rectangle,fill=red,draw=black,xscale=1.0,yscale=1.0, inner sep=1.pt]
\tikzstyle{square box}=[rectangle,fill=white,draw=black,xscale=1.0,yscale=1.0, inner sep=1.pt]
\tikzstyle{whn}=[dot,fill=white,minimum width=0.25cm,inner sep=0pt]
\tikzstyle{zhbx}=[rectangle,fill=white,draw=black,xscale=1.0,yscale=1.0, inner sep=1.6pt]
\tikzstyle{gray dot}=[dot,fill=gray!40!white,minimum width=0.25cm,inner sep=0pt]
\tikzstyle{triangle}=[fill=yellow,draw=black,shape=isosceles triangle,shape border rotate=90,isosceles triangle stretches=true,inner sep=0.8pt,minimum width=0.25cm,minimum height=2mm]


\tikzstyle{bn}=[circle,fill=black,draw=black,scale=.4]
\tikzstyle{wn}=[circle,fill=white,draw=black,scale=.6]
\tikzstyle{dn}=[circle,fill=none,draw=gray]

\tikzstyle{Z dot}=[inner sep=0mm, minimum size=2mm, shape=circle, draw=black, fill={rgb,255: red,221; green,255; blue,221}]
\tikzstyle{Z phase dot}=[minimum size=5mm, font={\footnotesize\boldmath}, shape=rectangle, rounded corners=2mm, inner sep=0.2mm, outer sep=-2mm, scale=0.8, draw=black, fill={rgb,255: red,221; green,255; blue,221}]
\tikzstyle{X dot}=[Z dot, shape=circle, draw=black, fill={rgb,255: red,255; green,136; blue,136}]
\tikzstyle{X phase dot}=[Z phase dot, fill={rgb,255: red,255; green,136; blue,136}, font={\footnotesize\boldmath}]
\tikzstyle{hadamard edge}=[-, dashed, dash pattern=on 2pt off 0.5pt, thick, draw={rgb,255: red,68; green,136; blue,255}]


\tikzstyle{black dot}=[inner sep=0.7mm,minimum width=0pt,minimum height=0pt,fill=black,draw=black,shape=circle]

\tikzstyle{dot}=[black dot]
\tikzstyle{smalldot}=[inner sep=0.4mm,minimum width=0pt,minimum height=0pt,fill=black,draw=black,shape=circle]
\tikzstyle{white dot}=[dot,fill=white]
\tikzstyle{antipode}=[white dot,inner sep=0.3mm,font=\footnotesize]
\tikzstyle{smallwhitedot}=[smalldot,fill=white]
\tikzstyle{alt white dot}=[white dot,label={[xshift=3.07mm,yshift=-0.05mm,font=\footnotesize]left:$*$}]
\tikzstyle{smallgraydot}=[smalldot,fill=gray!40!white]
\tikzstyle{box vertex}=[draw=black,rectangle]
\tikzstyle{small box}=[box vertex,fill=white]
\tikzstyle{whitebg}=[fill=white,inner sep=2pt]
\tikzstyle{graph state vertex}=[sg vertex,fill=black]

\tikzstyle{wide copoint}=[fill=white,draw=black,shape=isosceles triangle,shape border rotate=90,isosceles triangle stretches=true,inner sep=1pt,minimum width=1.5cm,minimum height=5mm]
\tikzstyle{wide point}=[fill=white,draw=black,shape=isosceles triangle,shape border rotate=-90,isosceles triangle stretches=true,inner sep=1pt,minimum width=1.5cm,minimum height=4mm]
\tikzstyle{very wide copoint}=[fill=white,draw=black,shape=isosceles triangle,shape border rotate=-90,isosceles triangle stretches=true,inner sep=1pt,minimum width=2.5cm,minimum height=4mm]
\tikzstyle{very wide empty copoint}=[draw=black,shape=isosceles triangle,shape border rotate=-90,isosceles triangle stretches=true,inner sep=1pt,minimum width=2.5cm,minimum height=4mm]
\tikzstyle{symm}=[ultra thick,shorten <=-1mm,shorten >=-1mm]


\tikzstyle{square gray box}=[rectangle,fill=gray!30,draw=black,minimum height=6mm,minimum width=6mm]
\tikzstyle{copoint}=[regular polygon,regular polygon sides=3,draw=black,scale=0.75,inner sep=-0.5pt,minimum width=7mm,fill=white]
\tikzstyle{point}=[regular polygon,regular polygon sides=3,draw=black,scale=0.75,inner sep=-0.5pt,minimum width=7mm,fill=white,regular polygon rotate=180]
\tikzstyle{gray point}=[point,fill=gray!40!white]
\tikzstyle{gray copoint}=[copoint,fill=gray!40!white]

\newcommand{\edgearrow}{{\arrow[black]{>}}}
\newcommand{\edgetick}{{\arrow[black,scale=0.7,very thick]{|}}}


\tikzstyle{diredge}=[->]
\tikzstyle{rdiredge}=[<-]
\tikzstyle{medium diredge}=[->]

\tikzstyle{short diredge}=[->]
\tikzstyle{halfedge}=[-)]
\tikzstyle{other halfedge}=[(-]
\tikzstyle{freeedge}=[(-)]
\tikzstyle{white edge}=[line width=5pt,white]
\tikzstyle{tick}=[postaction=decorate,decoration={markings, mark=at position 0.5 with \edgetick}]
\tikzstyle{small map edge}=[|-latex, gray!60!blue, shorten <=0.9mm, shorten >=0.5mm]
\tikzstyle{thick dashed edge}=[very thick,dashed,gray!40]
\tikzstyle{map edge}=[|-latex,very thick, gray!40, shorten <=1mm, shorten >=0.5mm]
\tikzstyle{tickedge}=[postaction=decorate,
  decoration={markings, mark=at position 0.5 with \edgetick}]
\tikzstyle{dirtickedge}=[postaction=decorate,
  decoration={markings, mark=at position 0.5 with \edgetick},
  decoration={markings, mark=at position 0.85 with \edgearrow}]
\tikzstyle{dirdoubletickedge}=[postaction=decorate,
  decoration={markings, mark=at position 0.4 with \edgetick},
  decoration={markings, mark=at position 0.6 with \edgetick},
  decoration={markings, mark=at position 0.85 with \edgearrow}]

\makeatletter
\newcommand{\boxshape}[3]{%
\pgfdeclareshape{#1}{
\inheritsavedanchors[from=rectangle] 
\inheritanchorborder[from=rectangle]
\inheritanchor[from=rectangle]{center}
\inheritanchor[from=rectangle]{north}
\inheritanchor[from=rectangle]{south}
\inheritanchor[from=rectangle]{west}
\inheritanchor[from=rectangle]{east}
\backgroundpath{
\southwest \pgf@xa=\pgf@x \pgf@ya=\pgf@y
\northeast \pgf@xb=\pgf@x \pgf@yb=\pgf@y

\@tempdima=#2
\@tempdimb=#3

\pgfpathmoveto{\pgfpoint{\pgf@xa - 5pt + \@tempdima}{\pgf@ya}}
\pgfpathlineto{\pgfpoint{\pgf@xa - 5pt - \@tempdima}{\pgf@yb}}
\pgfpathlineto{\pgfpoint{\pgf@xb + 5pt + \@tempdimb}{\pgf@yb}}
\pgfpathlineto{\pgfpoint{\pgf@xb + 5pt - \@tempdimb}{\pgf@ya}}
\pgfpathlineto{\pgfpoint{\pgf@xa - 5pt + \@tempdima}{\pgf@ya}}
\pgfpathclose
}
}}

\boxshape{NEbox}{0pt}{8pt}
\boxshape{SEbox}{0pt}{-8pt}
\boxshape{NWbox}{8pt}{0pt}
\boxshape{SWbox}{-8pt}{0pt}
\makeatother

\tikzstyle{map}=[draw,shape=NEbox,inner sep=7pt]
\tikzstyle{mapdag}=[draw,shape=SEbox,inner sep=7pt]
\tikzstyle{maptrans}=[draw,shape=SWbox,inner sep=7pt]
\tikzstyle{mapconj}=[draw,shape=NWbox,inner sep=7pt]

\tikzstyle{probs}=[shape=semicircle,fill=gray!40!white,draw=black,shape border rotate=180,minimum width=1.2cm]


\tikzstyle{arrs}=[-latex,font=\small,auto]
\tikzstyle{arrow plain}=[arrs]
\tikzstyle{arrow dashed}=[dashed,arrs]
\tikzstyle{arrow bold}=[very thick,arrs]
\tikzstyle{arrow hide}=[draw=white!0,-]
\tikzstyle{arrow reverse}=[latex-]
\tikzstyle{cdnode}=[]


\tikzstyle{gn}=[dot,fill=green,minimum width=0.25cm,inner sep=0pt]
\tikzstyle{rn}=[dot,fill=red,inner sep=0pt,minimum width=0.25cm]

\tikzstyle{rc}=[dot,thick,fill=white,draw = red,minimum width=0.3cm,inner sep=0pt]
\tikzstyle{gc}=[dot,thick,fill=white,draw= green,inner sep=0pt,minimum width=0.3cm]
\tikzstyle{bc}=[dot,thick,fill=white,draw= blue,minimum width=0.3cm]

\tikzstyle{label}=[circle,fill=white,minimum width=0.3cm]


\tikzstyle{clocklabel}=[dot,fill=yellow,draw=black,font=\tiny,inner sep=0.75pt]

\tikzstyle{rsn}=[circle split,draw,fill=red,font=\tiny,inner sep=0.75pt]
\tikzstyle{gsn}=[circle split,draw,fill=green,font=\tiny,inner sep=0.75pt]
\tikzstyle{bsn}=[circle split,draw,fill=blue,font=\tiny,inner sep=0.75pt]

\tikzstyle{rsc}=[circle split,thick,draw= red,draw,fill=white,font=\tiny,inner sep=0.75pt]
\tikzstyle{gsc}=[circle split,thick,draw= green,draw,fill=white,font=\tiny,inner sep=0.75pt]
\tikzstyle{bsc}=[circle split,thick,draw= blue,draw,fill=white,font=\tiny,inner sep=0.75pt]


\tikzstyle{cnot}=[fill=white,shape=circle,inner sep=-1.4pt]
\tikzstyle{wire label}=[font=\tiny, auto]










\tikzstyle{cdiag}=[matrix of math nodes, row sep=3em, column sep=3em, text height=1.5ex, text depth=0.25ex,inner sep=0.5em]
\tikzstyle{arrow above}=[transform canvas={yshift=0.5ex}]
\tikzstyle{arrow below}=[transform canvas={yshift=-0.5ex}]


\newtheorem{Th}{Theorem}[section]
\newtheorem{theorem}[Th]{Theorem}
\newtheorem{proposition}[Th]{Proposition} 
\newtheorem{lemma}[Th]{Lemma}
\newtheorem{corollary}[Th]{Corollary}
\newtheorem{definition}[Th]{Definition} 

\newtheorem{remark}[Th]{Remark}

\makeatletter
\newcommand{\vast}{\bBigg@{6.5}}
\makeatother



\title{An Algebraic Axiomatisation of ZX-calculus}
\author{Quanlong Wang
\institute{Cambridge Quantum Computing Ltd.}
\email{harny.wang@cambridgequantum.com}
}

\begin{document}
\maketitle
\begin{abstract}
ZX-calculus is a graphical language for quantum computing which is complete in the sense that calculation in matrices can be done in a purely diagrammatic way. However, all previous universally complete axiomatisations of ZX-calculus have included at least one rule involving trigonometric functions such as $\sin$ and $\cos$ which makes it difficult for application purpose. In this paper we give an algebraic complete axiomatisation of ZX-calculus instead such that there are only ring operations involved for phases. With this algebraic axiomatisation of ZX-calculus, we are able to establish for the first time a simple translation of diagrams from another graphical language called ZH-calculus and to derive all the ZX-translated rules of ZH-calculus. As a consequence,  we have a great benefit that all techniques obtained in ZH-calculus can be transplanted to ZX-calculus, which can't  be obtained by just using the completeness of ZX-calculus.

\end{abstract}


\section{Introduction}

The ZX-calculus was introduced by Coecke and Duncan \cite{CoeckeDuncan} as a graphical language for quantum computing, based on the framework of compact closed categories. The core part of ZX-calculus is a pair of spiders (complementary observables) with strong complementarity \cite{CDKW}. The ZX-calculus can also be seen as a form of PROP \cite{maclane1965}, thus it is usually presented by generators and rewriting rules.

There are three important properties of ZX-calculus: soundness, universality and completeness. Soundness means all the ZX rewriting rules hold when interpreted by matrices. Universality means each matrix (linear map between finite dimensional Hilbert Spaces)can be represented by a ZX diagram. Finally, completeness means each diagrammatic equality can be derived from ZX rules if their corresponding matrix equality holds in  finite dimensional Hilbert Spaces. The soundness and universality of  ZX-calculus have been proved in \cite{CoeckeDuncan}. The universal completeness  of ZX-calculus (which means ZX-calculus is complete for the full pure qubit quantum mechanics instead of any part of it) was first given in \cite{ngwang} and then incorporated in \cite{amarngwanglics}. The feature of this complete axiomatisation is that it has two new generators: the $\lambda$ box and the triangle symbol (which first appeared in \cite{jpvcltlics} as a short notation for some diagram composed of mere green and red nodes).  Based on some results in  \cite{ngwang},  there came another 
 universal complete axiomatisation of ZX-calculus  \cite{jpvbeyondlics}  with only traditional generators as given in  \cite{CoeckeDuncan}.  Thereafter, two more  universal complete axiomatisations of ZX-calculus  were presented  \cite{jpvnormfmlics},  \cite{Renaudprulelics}. All of these universal complete axiomatisations of ZX-calculus have some non-algebraic rule involved with trigonometry functions such as $\sin$ or $\cos$.
 For example, the following so-called (P) rule  \cite{coeckewang} is deployed in \cite{Renaudprulelics} (with scalars added) as a key rule for universal completeness.
 \begin{equation}\label{zxztoxzxcreq}
	\beginpgfgraphicnamed{TikZit/zxztoxzx}
	\InputIfFileExists{TikZit/zxztoxzx.tikz}{}{\input{./figures/TikZit/zxztoxzx.tikz}}%
	\endpgfgraphicnamed
\qquad\mbox{with}\quad
\left\{
\begin{array}{l}
\alpha_2=\arg z+\arg z_1\\
\beta_2=2\arg (|\frac{z}{z_1}|+i)\\
\gamma_2=\arg z-\arg z_1
\end{array}
\right.
\end{equation}
where:
\[
\begin{array}{l}
z=\cos\frac{\beta_1}{2}\cos\frac{\alpha_1+\gamma_1}{2}+i\sin\frac{\beta_1}{2}\cos\frac{\alpha_1-\gamma_1}{2}\\
z_1=\cos\frac{\beta_1}{2}\sin\frac{\alpha_1+\gamma_1}{2}-i\sin\frac{\beta_1}{2}\sin\frac{\alpha_1-\gamma_1}{2}
\end{array}
\]
 One could imagine that it would be very hard to use such a rule directly if there are trigonometry functions involved. For example, in (\ref{zxztoxzxcreq}), if we take $\alpha_1=\frac{\pi}{4},~  \beta_1=-\frac{\pi}{4}, \gamma_1=\frac{\pi}{2}, $ then by  tedious calculation one can get that $\alpha_2=arctan(-\sqrt{2}),~  \beta_2=-\frac{\pi}{3}, \gamma_2=arctan(\frac{-1}{\sqrt{2}});$ or $\alpha_2=\pi-arctan(\sqrt{2}),~  \beta_2=\frac{\pi}{3}, \gamma_2=\pi-arctan(\frac{1}{\sqrt{2}}).$ This means simple angles can be turned into complicated angles, thus not easy to deal with in applications.  

In this paper we overcome this drawback by giving a new complete axiomatisation of ZX-calculus with purely algebraic rules, in the sense that there are only ring operations involved for phases. One of the features of this axiomatisation in comparison to the previous ones is that we deploy a generator which is first introduced in \cite{ngwang2} diagrammatically represented by green box with parameters ranging in any complex numbers.   The usefulness of the new generator has been fully shown in  \cite{coeckewang} by its power in  deriving  the (P) rule. We obtain the completeness by deriving all the rules in \cite{amarngwanglics} from this new set of algebraic rules. One significant application of these  algebraic rules is the derivation of the so-called spider nest identities \cite{bobanthonywang}, which are key to the T-count reduction of quantum circuits \cite{nielbianwangtqc, nielbianwang}. The axiomatisation in  \cite{amarngwanglics} also has a triangle diagram as a generator as is the same case for the algebraic axiomatisation, whereas the former axiomatisation has a non-algebraic rule and its triangle-involved rules are more complicated then that of the latter. We point out that both the  green box and the triangle are not really external to the original generators usually described as green and red spiders \cite{CoeckeDuncan},  they  actually can be expressed in terms of those original generators, though in a complicated form  \cite{ngwang, amarngwanglics}.

Furthermore, for another graphical language for quantum computing called  ZH-calculus  \cite{miriamaleks},  which has applications in various fields like tensor network \cite{nakons2020} and hypergraph states  \cite{lwaleks2020},  we give a simple translation from the full ZH-calculus to ZX-calculus for the first time. Via this translation and the algebraic rules, we are able to derive all the ZX-translated ZH rules from the algebraic ZX rules. Although in principle we know this can be done because of the completeness of ZX-calculus,  that doesn't help us to gain a great benefit that all techniques obtained in ZH-calculus can be transplanted to ZX-calculus. Only by the detailed derivations  using the algebraic rules could we have such bonus for  ZX-calculus.

\section{Algebraic axiomatisation of ZX-calculus}
The ZX-calculus is based on a compact closed PROP \cite{maclane1965}, which  is a strict symmetric monoidal category whose objects are generated  by one object, with a compact structure  \cite{Coeckebk} as well. Each PROP can be described as a presentation in terms of  generators and relations \cite{BaezCR}.




First we give the generators of ZX-calculus in the following table. Note that all the diagrams in this paper should be read from top to bottom. 
\begin{table}[!h]
\begin{center} 
\begin{tabular}{|r@{~}r@{~}c@{~}c|r@{~}r@{~}c@{~}c|}
\hline
$R_{Z,a}^{(n,m)}$&$:$&$n\to m$ & %
	\beginpgfgraphicnamed{TikZit/generalgreenspider}
	\InputIfFileExists{TikZit/generalgreenspider.tikz}{}{\input{./figures/TikZit/generalgreenspider.tikz}}%
	\endpgfgraphicnamed
  & $H$&$:$&$1\to 1$ &%
	\beginpgfgraphicnamed{TikZit/HadaDecomSingleslt}
	\InputIfFileExists{TikZit/HadaDecomSingleslt.tikz}{}{\input{./figures/TikZit/HadaDecomSingleslt.tikz}}%
	\endpgfgraphicnamed
 \\\hline
 $\mathbb I$&$:$&$1\to 1$&%
	\beginpgfgraphicnamed{TikZit/Id}
	\InputIfFileExists{TikZit/Id.tikz}{}{\input{./figures/TikZit/Id.tikz}}%
	\endpgfgraphicnamed
 
 &  $\sigma$&$:$&$ 2\to 2$& %
	\beginpgfgraphicnamed{TikZit/swap}
	\InputIfFileExists{TikZit/swap.tikz}{}{\input{./figures/TikZit/swap.tikz}}%
	\endpgfgraphicnamed
\\\hline
   $C_a$&$:$&$ 0\to 2$& %
	\beginpgfgraphicnamed{TikZit/cap}
	\InputIfFileExists{TikZit/cap.tikz}{}{\input{./figures/TikZit/cap.tikz}}%
	\endpgfgraphicnamed
 &$ C_u$&$:$&$ 2\to 0$&%
	\beginpgfgraphicnamed{TikZit/cup}
	\InputIfFileExists{TikZit/cup.tikz}{}{\input{./figures/TikZit/cup.tikz}}%
	\endpgfgraphicnamed
 \\\hline
  $T$&$:$&$1\to 1$&%
	\beginpgfgraphicnamed{TikZit/triangle}
	\InputIfFileExists{TikZit/triangle.tikz}{}{\input{./figures/TikZit/triangle.tikz}}%
	\endpgfgraphicnamed
  & $T^{-1}$&$:$&$1\to 1$&%
	\beginpgfgraphicnamed{TikZit/triangleinv}
	\InputIfFileExists{TikZit/triangleinv.tikz}{}{\input{./figures/TikZit/triangleinv.tikz}}%
	\endpgfgraphicnamed
 \\\hline
\end{tabular}\caption{Generators of ZX-calculus,where $m,n\in \mathbb N$, $ a  \in \mathbb C$.} \label{qbzxgenerator}
\end{center}
\end{table}
\FloatBarrier

 \begin{remark}
The generator $R_{Z,a}^{(n,m)}$ is first introduced in \cite{ngwang2}. It seems that the generators $R_{Z,a}^{(n,m)}$,  $T$ and $T^{-1}$ are totally external to the original generators usually described as green and red spiders \cite{CoeckeDuncan}, but they can actually be expressed in terms of those original generators, though in a complicated form  \cite{ngwang, amarngwanglics}. Here we just show that for $T$ and $T^{-1}$ as follows:
\[  %
	\beginpgfgraphicnamed{TikZit/triangleinvnspider}
	\InputIfFileExists{TikZit/triangleinvnspider.tikz}{}{\input{./figures/TikZit/triangleinvnspider.tikz}}%
	\endpgfgraphicnamed
\]
 \end{remark}

Also we define some diagrams as follows:

 \begin{equation*}\label{downtriangledef}
	\beginpgfgraphicnamed{TikZit/generalredspiderdefini}
	\InputIfFileExists{TikZit/generalredspiderdefini.tikz}{}{\input{./figures/TikZit/generalredspiderdefini.tikz}}%
	\endpgfgraphicnamed
 (H) \quad\quad\quad\quad %
	\beginpgfgraphicnamed{TikZit/downtriangle}
	\InputIfFileExists{TikZit/downtriangle.tikz}{}{\input{./figures/TikZit/downtriangle.tikz}}%
	\endpgfgraphicnamed

\end{equation*} 
For simplicity, we make the following conventions: 
\[
	\beginpgfgraphicnamed{TikZit/spider0denote}
	\InputIfFileExists{TikZit/spider0denote.tikz}{}{\input{./figures/TikZit/spider0denote.tikz}}%
	\endpgfgraphicnamed
 
\]
and $$e: %
	\beginpgfgraphicnamed{TikZit/emptysquare}
	\InputIfFileExists{TikZit/emptysquare.tikz}{}{\input{./figures/TikZit/emptysquare.tikz}}%
	\endpgfgraphicnamed
:=$$
which means  $e$ represents an empty diagram.

There is a standard interpretation $\left\llbracket \cdot \right\rrbracket$ for the ZX diagrams:
\[
\left\llbracket %
	\beginpgfgraphicnamed{TikZit/generalgreenspider}
	\InputIfFileExists{TikZit/generalgreenspider.tikz}{}{\input{./figures/TikZit/generalgreenspider.tikz}}%
	\endpgfgraphicnamed
 \right\rrbracket=\ket{0}^{\otimes m}\bra{0}^{\otimes n}+a\ket{1}^{\otimes m}\bra{1}^{\otimes n},
\]
\[
\left\llbracket %
	\beginpgfgraphicnamed{TikZit/generalredspider}
	\InputIfFileExists{TikZit/generalredspider.tikz}{}{\input{./figures/TikZit/generalredspider.tikz}}%
	\endpgfgraphicnamed
 \right\rrbracket=\ket{+}^{\otimes m}\bra{+}^{\otimes n}+a\ket{-}^{\otimes m}\bra{-}^{\otimes n},
\]
\[
\left\llbracket%
	\beginpgfgraphicnamed{TikZit/HadaDecomSingleslt}
	\begin{tikzpicture}
	\begin{pgfonlayer}{nodelayer}
		\node [style=H box] (0) at (-0.75, 0) {$H$};
		\node [style=none] (1) at (-0.75, -0.5) {};
		\node [style=none] (2) at (-0.75, 0.5) {};
	\end{pgfonlayer}
	\begin{pgfonlayer}{edgelayer}
		\draw (2.center) to (0);
		\draw (1.center) to (0);
	\end{pgfonlayer}
\end{tikzpicture}}%
	\endpgfgraphicnamed
\right\rrbracket=\frac{1}{\sqrt{2}}\begin{pmatrix}
        1 & 1 \\
        1 & -1
 \end{pmatrix}, \quad
  \left\llbracket%
	\beginpgfgraphicnamed{TikZit/emptysquare}
	\InputIfFileExists{TikZit/emptysquare.tikz}{}{\input{./figures/TikZit/emptysquare.tikz}}%
	\endpgfgraphicnamed
\right\rrbracket=1, \quad
\left\llbracket%
	\beginpgfgraphicnamed{TikZit/Id}
	\begin{tikzpicture}
	\begin{pgfonlayer}{nodelayer}
		\node [style=none] (1) at (0.5, 0.3) {};
		\node [style=none] (2) at (0.5, -0.3) {};
		\node [style=none] (3) at (0.5, -0.5) {};
		\node [style=none] (4) at (0.5, 0.5) {};
	\end{pgfonlayer}
	\begin{pgfonlayer}{edgelayer}
		\draw (1.center) to (2.center);
	\end{pgfonlayer}
\end{tikzpicture}}%
	\endpgfgraphicnamed
\right\rrbracket=\begin{pmatrix}
        1 & 0 \\
        0 & 1
 \end{pmatrix}, 
   \]
\[
 \left\llbracket%
	\beginpgfgraphicnamed{TikZit/triangle}
	\begin{tikzpicture}
	\begin{pgfonlayer}{nodelayer}
		\node [style=none] (0) at (0, 0.5) {};
		\node [style=triangle] (1) at (0, 0) {};
		\node [style=none] (2) at (0, -0.5) {};
	\end{pgfonlayer}
	\begin{pgfonlayer}{edgelayer}
		\draw (0.center) to (2.center);
	\end{pgfonlayer}
\end{tikzpicture}}%
	\endpgfgraphicnamed
\right\rrbracket=\begin{pmatrix}
        1 & 1 \\
        0 & 1
 \end{pmatrix}, \quad \quad
  \left\llbracket%
	\beginpgfgraphicnamed{TikZit/triangleinv}
	\begin{tikzpicture}
	\begin{pgfonlayer}{nodelayer}
		\node [style=none] (0) at (0.25, 0.25) {-{\scriptsize1}};
		\node [style=triangle] (1) at (0, 0) {};
		\node [style=none] (2) at (0, -0.5) {};
		\node [style=none] (3) at (0, 0.5) {};
	\end{pgfonlayer}
	\begin{pgfonlayer}{edgelayer}
		\draw (3.center) to (2.center);
	\end{pgfonlayer}
\end{tikzpicture}}%
	\endpgfgraphicnamed
\right\rrbracket=\begin{pmatrix}
        1 & -1 \\
        0 & 1
 \end{pmatrix}.
 \]

\[
 \left\llbracket%
	\beginpgfgraphicnamed{TikZit/swap}
	\InputIfFileExists{TikZit/swap.tikz}{}{\input{./figures/TikZit/swap.tikz}}%
	\endpgfgraphicnamed
\right\rrbracket=\begin{pmatrix}
        1 & 0 & 0 & 0 \\
        0 & 0 & 1 & 0 \\
        0 & 1 & 0 & 0 \\
        0 & 0 & 0 & 1 
 \end{pmatrix}, \quad
  \left\llbracket%
	\beginpgfgraphicnamed{TikZit/cap}
	\begin{tikzpicture}
	\begin{pgfonlayer}{nodelayer}
		\node [style=none] (0) at (0, -0) {};
		\node [style=none] (1) at (1, -0) {};
	\end{pgfonlayer}
	\begin{pgfonlayer}{edgelayer}
		\draw [bend left=90, looseness=1.50] (0.center) to (1.center);
	\end{pgfonlayer}
\end{tikzpicture}}%
	\endpgfgraphicnamed
\right\rrbracket=\begin{pmatrix}
        1  \\
        0  \\
        0  \\
        1  \\
 \end{pmatrix}, \quad
   \left\llbracket%
	\beginpgfgraphicnamed{TikZit/cup}
	\begin{tikzpicture}
	\begin{pgfonlayer}{nodelayer}
		\node [style=none] (0) at (0, 0.5) {};
		\node [style=none] (1) at (1, 0.5) {};
	\end{pgfonlayer}
	\begin{pgfonlayer}{edgelayer}
		\draw [bend right=90, looseness=1.50] (0.center) to (1.center);
	\end{pgfonlayer}
\end{tikzpicture}}%
	\endpgfgraphicnamed
\right\rrbracket=\begin{pmatrix}
        1 & 0 & 0 & 1 
         \end{pmatrix},
   \]

\[  \llbracket D_1\otimes D_2  \rrbracket =  \llbracket D_1  \rrbracket \otimes  \llbracket  D_2  \rrbracket, \quad 
 \llbracket D_1\circ D_2  \rrbracket =  \llbracket D_1  \rrbracket \circ  \llbracket  D_2  \rrbracket,
  \]
where 
$$ \ket{0}= \begin{pmatrix}
        1  \\
        0  \\
 \end{pmatrix}, \quad 
 \bra{0}=\begin{pmatrix}
        1 & 0 
         \end{pmatrix},
 \quad  \ket{1}= \begin{pmatrix}
        0  \\
        1  \\
 \end{pmatrix}, \quad 
  \bra{1}=\begin{pmatrix}
     0 & 1 
         \end{pmatrix},
 $$
$$ \ket{+}= \frac{1}{\sqrt{2}}\begin{pmatrix}
        1  \\
        1  \\
 \end{pmatrix}, \quad 
 \bra{+}=\frac{1}{\sqrt{2}}\begin{pmatrix}
        1 & 1
         \end{pmatrix},
 \quad  \ket{-}= \frac{1}{\sqrt{2}}\begin{pmatrix}
        1  \\
        -1  \\
 \end{pmatrix}, \quad 
  \bra{-}=\frac{1}{\sqrt{2}}\begin{pmatrix}
     1 & -1 
         \end{pmatrix}.
 $$

 Now we give a purely algebraic set of rules for ZX-calculus in the sense that there are no trigonometry functions such as $\sin$ and $\cos$ involved.

 \begin{figure}[!h]
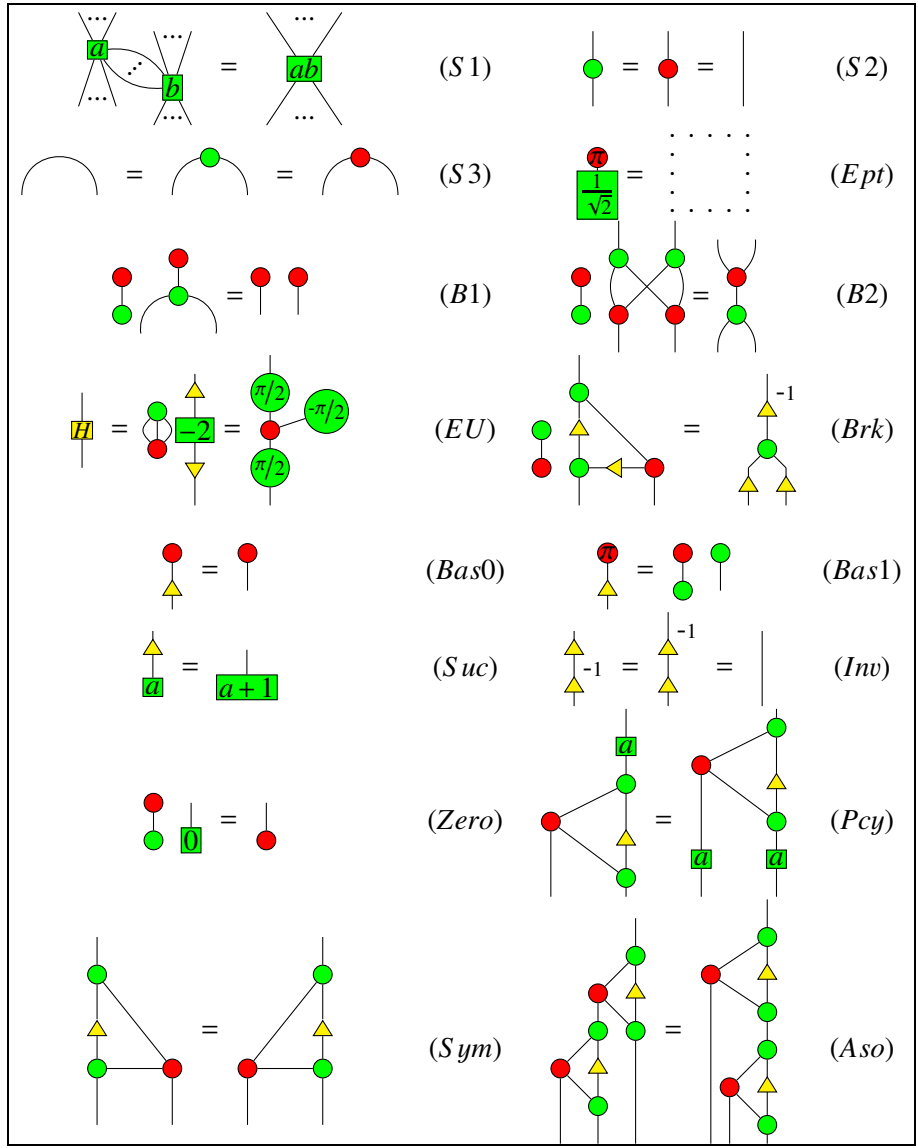

\begin{center}
\[
\quad \qquad\begin{array}{|cccc|}
\hline
	\beginpgfgraphicnamed{TikZit/generalgreenspiderfuse}
	\InputIfFileExists{TikZit/generalgreenspiderfuse.tikz}{}{\input{./figures/TikZit/generalgreenspiderfuse.tikz}}%
	\endpgfgraphicnamed
&(S1) &%
	\beginpgfgraphicnamed{TikZit/s2new2}
	\InputIfFileExists{TikZit/s2new2.tikz}{}{\input{./figures/TikZit/s2new2.tikz}}%
	\endpgfgraphicnamed
 &(S2)\\
	\beginpgfgraphicnamed{TikZit/induced_compact_structure}
	\InputIfFileExists{TikZit/induced_compact_structure.tikz}{}{\input{./figures/TikZit/induced_compact_structure.tikz}}%
	\endpgfgraphicnamed
&(S3) & %
	\beginpgfgraphicnamed{TikZit/rpirt2empt}
	\InputIfFileExists{TikZit/rpirt2empt.tikz}{}{\input{./figures/TikZit/rpirt2empt.tikz}}%
	\endpgfgraphicnamed
  &(Ept) \\
	\beginpgfgraphicnamed{TikZit/b1slt}
	\InputIfFileExists{TikZit/b1slt.tikz}{}{\input{./figures/TikZit/b1slt.tikz}}%
	\endpgfgraphicnamed
=%
	\beginpgfgraphicnamed{TikZit/b1srt}
	\InputIfFileExists{TikZit/b1srt.tikz}{}{\input{./figures/TikZit/b1srt.tikz}}%
	\endpgfgraphicnamed
&(B1)  & %
	\beginpgfgraphicnamed{TikZit/b2slt}
	\InputIfFileExists{TikZit/b2slt.tikz}{}{\input{./figures/TikZit/b2slt.tikz}}%
	\endpgfgraphicnamed
=%
	\beginpgfgraphicnamed{TikZit/b2srt}
	\InputIfFileExists{TikZit/b2srt.tikz}{}{\input{./figures/TikZit/b2srt.tikz}}%
	\endpgfgraphicnamed
&(B2)\\ 
  %
	\beginpgfgraphicnamed{TikZit/eu2styles2}
	\InputIfFileExists{TikZit/eu2styles2.tikz}{}{\input{./figures/TikZit/eu2styles2.tikz}}%
	\endpgfgraphicnamed
&(EU) & %
	\beginpgfgraphicnamed{TikZit/anddflip}
	\InputIfFileExists{TikZit/anddflip.tikz}{}{\input{./figures/TikZit/anddflip.tikz}}%
	\endpgfgraphicnamed
&(Brk) \\
 & &&\\
	\beginpgfgraphicnamed{TikZit/triangleocopy}
	\InputIfFileExists{TikZit/triangleocopy.tikz}{}{\input{./figures/TikZit/triangleocopy.tikz}}%
	\endpgfgraphicnamed
 &(Bas0) &%
	\beginpgfgraphicnamed{TikZit/trianglepicopy}
	\InputIfFileExists{TikZit/trianglepicopy.tikz}{}{\input{./figures/TikZit/trianglepicopy.tikz}}%
	\endpgfgraphicnamed
&(Bas1)\\
	\beginpgfgraphicnamed{TikZit/plus1}
	\InputIfFileExists{TikZit/plus1.tikz}{}{\input{./figures/TikZit/plus1.tikz}}%
	\endpgfgraphicnamed
&(Suc)& %
	\beginpgfgraphicnamed{TikZit/triangleinvers}
	\InputIfFileExists{TikZit/triangleinvers.tikz}{}{\input{./figures/TikZit/triangleinvers.tikz}}%
	\endpgfgraphicnamed
  & (Inv) \\
	\beginpgfgraphicnamed{TikZit/zerotored}
	\InputIfFileExists{TikZit/zerotored.tikz}{}{\input{./figures/TikZit/zerotored.tikz}}%
	\endpgfgraphicnamed
&(Zero)&%
	\beginpgfgraphicnamed{TikZit/TR1314combine2}
	\InputIfFileExists{TikZit/TR1314combine2.tikz}{}{\input{./figures/TikZit/TR1314combine2.tikz}}%
	\endpgfgraphicnamed
 &(Pcy)\\
	\beginpgfgraphicnamed{TikZit/lemma4}
	\InputIfFileExists{TikZit/lemma4.tikz}{}{\input{./figures/TikZit/lemma4.tikz}}%
	\endpgfgraphicnamed
&(Sym) &  %
	\beginpgfgraphicnamed{TikZit/associate}
	\InputIfFileExists{TikZit/associate.tikz}{}{\input{./figures/TikZit/associate.tikz}}%
	\endpgfgraphicnamed
 &(Aso)\\ 
  		  		\hline
  		\end{array}\]   
  	\end{center}
  	\caption{Algebraic rules, $a, b \in \mathbb C.$}\label{figurealgebra}
  \end{figure}
 \FloatBarrier
 \begin{remark}
The last three rules are all about the properties of the W state  %
	\beginpgfgraphicnamed{TikZit/wstatezx}
	\InputIfFileExists{TikZit/wstatezx.tikz}{}{\input{./figures/TikZit/wstatezx.tikz}}%
	\endpgfgraphicnamed
: (Pcy) means  phase copy,  i.e., any phase can be copied by the W state;  (Sym) means symmetry, i.e., the W state is symmetric; (Aso) means associativity, i.e., the W state is associative. In addition, the rules (H) and (S1) are first introduced in \cite{ngwang2}.
 \end{remark}

It is a routine check that these rules are sound in the sense that they still hold under the standard interpretation $\left\llbracket \cdot \right\rrbracket$. We mention again that a significant application of these  algebraic rules is the derivation of the so-called spider nest identities \cite{bobanthonywang}, which are key to the T-count reduction of quantum circuits \cite{nielbianwangtqc, nielbianwang}. 

With the standard interpretation and the above rules, we can define the completeness of ZX-calculus. 
  \begin{definition}
 The ZX-calculus is called complete if for  any two diagrams $D_1$ and $D_2$,  $\llbracket D_1  \rrbracket =  \llbracket  D_2  \rrbracket$ must imply that $ZX\vdash D_1=D_2$.
  \end{definition}
   \begin{remark}
All the rules in Figure \ref{figurealgebra}  now are  algebraic, and they will be proved to be a complete axiomatisation of ZX-calculs. One may wonder how these rules are obtained.  The answer is simply that giving rules is actually a constructive thing, and these rules are basically refined from plenty of practice of diagrammatical rewriting based on the previous rules \cite{amarngwanglics}. The reason why trigonometry functions can be eliminated in this new set of rules is because the generators we have chosen in Table  \ref{qbzxgenerator} allow an algebraic translation (an isomorphic functor in fact) from the ZW-calculus which is algebraic at the beginning \cite{amarngwanglics}. Furthermore, we note that this new set of algebraic rules are not unique, as one could add more rules to the set or derive equivalent rules from them. The most important thing for choosing a set of rules is that they should be useful in applications as much as possible.
 \end{remark}

  Below we give some useful properties following from  Figure \ref{figurealgebra}.
  \begin{lemma}
	 %
	\beginpgfgraphicnamed{TikZit/lemmahopf}
	\InputIfFileExists{TikZit/lemmahopf.tikz}{}{\input{./figures/TikZit/lemmahopf.tikz}}%
	\endpgfgraphicnamed
 (Hopf)
\end{lemma}
This has been proved in \cite{bpw} based on rules (S1), (S2), (S3), (B1), (B2) and the definition of red spider.
  
   \begin{lemma}\label{Invscalar}
	\beginpgfgraphicnamed{TikZit/lemmainv}
	\InputIfFileExists{TikZit/lemmainv.tikz}{}{\input{./figures/TikZit/lemmainv.tikz}}%
	\endpgfgraphicnamed
  (Ivs)
     \end{lemma}
\begin{proof}
	\beginpgfgraphicnamed{TikZit/origininvprf}
	\InputIfFileExists{TikZit/origininvprf.tikz}{}{\input{./figures/TikZit/origininvprf.tikz}}%
	\endpgfgraphicnamed
  
 \end{proof}
  \begin{lemma}
	\beginpgfgraphicnamed{TikZit/lemma6b}
	\InputIfFileExists{TikZit/lemma6b.tikz}{}{\input{./figures/TikZit/lemma6b.tikz}}%
	\endpgfgraphicnamed
 (Picp)
     \end{lemma}
  This has been proved in \cite{bpw} based on rules (S1), (S2), (S3), (B1), (B2) and the definition of red spider.
  
  \begin{lemma}
	\beginpgfgraphicnamed{TikZit/lemmacom}
	\InputIfFileExists{TikZit/lemmacom.tikz}{}{\input{./figures/TikZit/lemmacom.tikz}}%
	\endpgfgraphicnamed
 (Com)
\end{lemma}
	  This has been proved in \cite{bpw} based on rules (S1), (S2), (S3), (B1), (B2) and the definition of red spider. 
     
  \begin{lemma}\label{redpigdot}
	\beginpgfgraphicnamed{TikZit/redpigrdot}
	\InputIfFileExists{TikZit/redpigrdot.tikz}{}{\input{./figures/TikZit/redpigrdot.tikz}}%
	\endpgfgraphicnamed
 
\end{lemma}
\begin{proof}
	\beginpgfgraphicnamed{TikZit/redpigrdotprf}
	\InputIfFileExists{TikZit/redpigrdotprf.tikz}{}{\input{./figures/TikZit/redpigrdotprf.tikz}}%
	\endpgfgraphicnamed
  
 \end{proof}     
     
\section{Proof of completeness}
In this section, we prove that the rules in Figure \ref{figurealgebra} are complete for ZX-calculus. Since it is already proved in \cite{amarngwanglics} that ZX-calculus is complete with the rules presented in Figure  \ref{figureold1} and Figure  \ref{figureold2}, we only need to prove that all the rules in Figure  \ref{figureold1} and  \ref{figureold2} can be derived from rules in Figure  \ref{figurealgebra}.

\begin{figure}[!h]
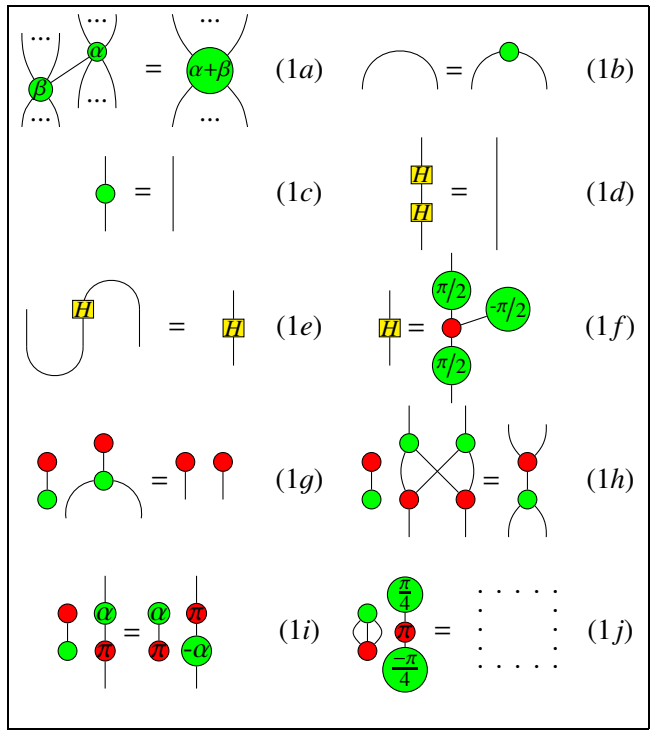

\begin{center}
\[
\quad \qquad\begin{array}{|cccc|}
\hline
	\beginpgfgraphicnamed{TikZit/spiderg1a}
	\InputIfFileExists{TikZit/spiderg1a.tikz}{}{\input{./figures/TikZit/spiderg1a.tikz}}%
	\endpgfgraphicnamed
 &(1a) & %
	\beginpgfgraphicnamed{TikZit/induced_compact_structure-2wirelt}
	\InputIfFileExists{TikZit/induced_compact_structure-2wirelt.tikz}{}{\input{./figures/TikZit/induced_compact_structure-2wirelt.tikz}}%
	\endpgfgraphicnamed
=%
	\beginpgfgraphicnamed{TikZit/induced_compact_structure-2wirert}
	\InputIfFileExists{TikZit/induced_compact_structure-2wirert.tikz}{}{\input{./figures/TikZit/induced_compact_structure-2wirert.tikz}}%
	\endpgfgraphicnamed
&(1b)\\
	\beginpgfgraphicnamed{TikZit/s2new}
	\InputIfFileExists{TikZit/s2new.tikz}{}{\input{./figures/TikZit/s2new.tikz}}%
	\endpgfgraphicnamed
 &(1c) & %
	\beginpgfgraphicnamed{TikZit/hsquare}
	\InputIfFileExists{TikZit/hsquare.tikz}{}{\input{./figures/TikZit/hsquare.tikz}}%
	\endpgfgraphicnamed
 &(1d)\\
	\beginpgfgraphicnamed{TikZit/hslidecap1e}
	\InputIfFileExists{TikZit/hslidecap1e.tikz}{}{\input{./figures/TikZit/hslidecap1e.tikz}}%
	\endpgfgraphicnamed
 &(1e) &  %
	\beginpgfgraphicnamed{TikZit/HadaDecomSingleslt}
	\InputIfFileExists{TikZit/HadaDecomSingleslt.tikz}{}{\input{./figures/TikZit/HadaDecomSingleslt.tikz}}%
	\endpgfgraphicnamed
= %
	\beginpgfgraphicnamed{TikZit/HadaDecomSinglesrt2}
	\InputIfFileExists{TikZit/HadaDecomSinglesrt2.tikz}{}{\input{./figures/TikZit/HadaDecomSinglesrt2.tikz}}%
	\endpgfgraphicnamed
&(1f)\\
    %
	\beginpgfgraphicnamed{TikZit/b1slt}
	\InputIfFileExists{TikZit/b1slt.tikz}{}{\input{./figures/TikZit/b1slt.tikz}}%
	\endpgfgraphicnamed
=%
	\beginpgfgraphicnamed{TikZit/b1srt}
	\InputIfFileExists{TikZit/b1srt.tikz}{}{\input{./figures/TikZit/b1srt.tikz}}%
	\endpgfgraphicnamed
&(1g) & %
	\beginpgfgraphicnamed{TikZit/b2slt}
	\InputIfFileExists{TikZit/b2slt.tikz}{}{\input{./figures/TikZit/b2slt.tikz}}%
	\endpgfgraphicnamed
=%
	\beginpgfgraphicnamed{TikZit/b2srt}
	\InputIfFileExists{TikZit/b2srt.tikz}{}{\input{./figures/TikZit/b2srt.tikz}}%
	\endpgfgraphicnamed
&(1h)\\
    &&&\\ 
	\beginpgfgraphicnamed{TikZit/k2slt}
	\InputIfFileExists{TikZit/k2slt.tikz}{}{\input{./figures/TikZit/k2slt.tikz}}%
	\endpgfgraphicnamed
=%
	\beginpgfgraphicnamed{TikZit/k2srt}
	\InputIfFileExists{TikZit/k2srt.tikz}{}{\input{./figures/TikZit/k2srt.tikz}}%
	\endpgfgraphicnamed
&(1i) &   %
	\beginpgfgraphicnamed{TikZit/newemptyrl}
	\InputIfFileExists{TikZit/newemptyrl.tikz}{}{\input{./figures/TikZit/newemptyrl.tikz}}%
	\endpgfgraphicnamed
 &(1j)\\
&&&\\ 
\hline
\end{array}\]
\end{center}
  \caption{Previous ZX-calculus rules I, where $\alpha, \beta\in [0,~2\pi)$.}\label{figureold1}  
  \end{figure}
  \FloatBarrier

 \begin{figure}[!h]
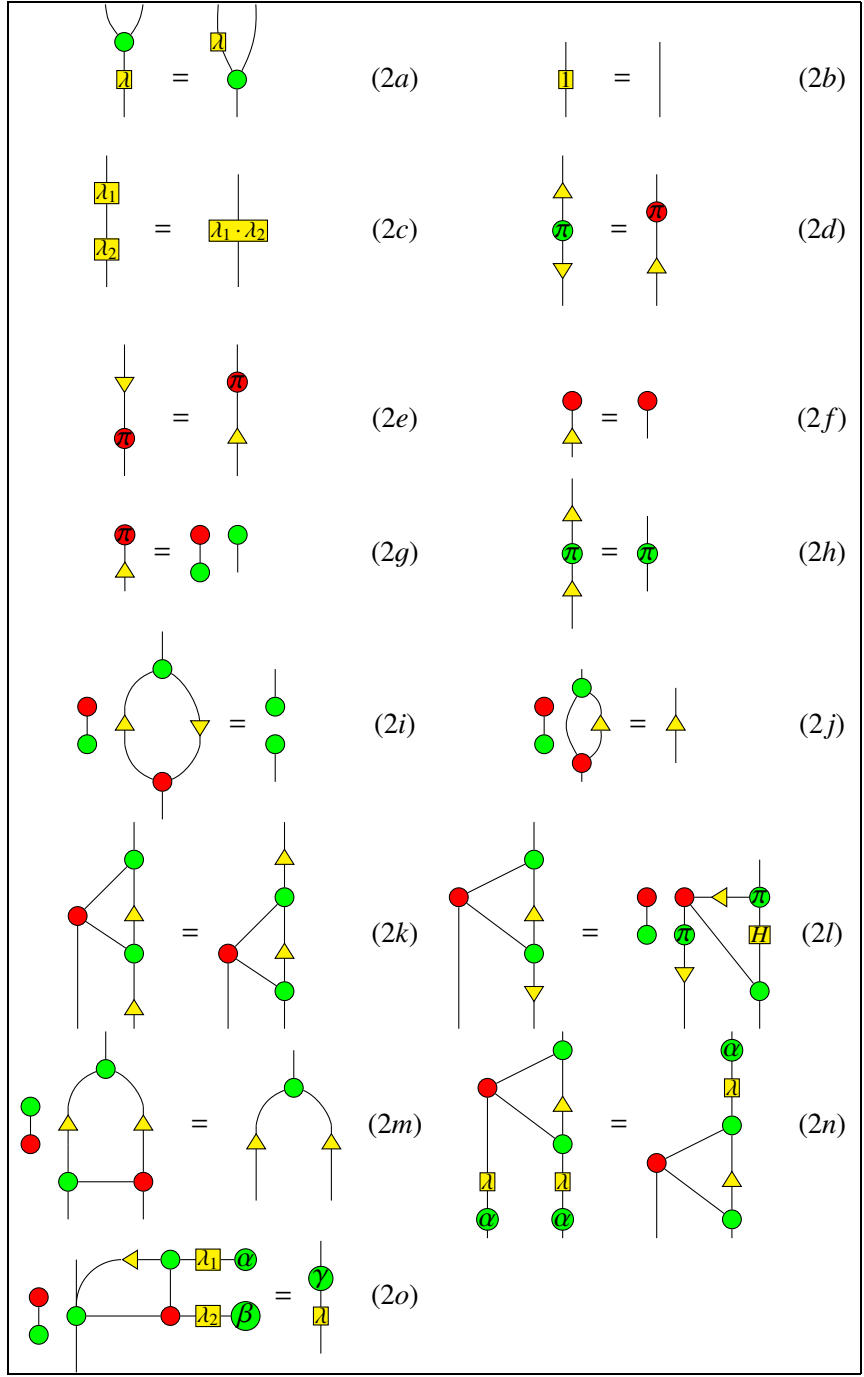

  	\begin{center}
  		\[
  		\quad \qquad\begin{array}{|cccc|}
  		\hline
	\beginpgfgraphicnamed{TikZit/lambbranch2}
	\InputIfFileExists{TikZit/lambbranch2.tikz}{}{\input{./figures/TikZit/lambbranch2.tikz}}%
	\endpgfgraphicnamed
&(2a)&%
	\beginpgfgraphicnamed{TikZit/sqr1is1}
	\InputIfFileExists{TikZit/sqr1is1.tikz}{}{\input{./figures/TikZit/sqr1is1.tikz}}%
	\endpgfgraphicnamed
&(2b) \\
		 &&&\\ 
	\beginpgfgraphicnamed{TikZit/lambdatimes}
	\InputIfFileExists{TikZit/lambdatimes.tikz}{}{\input{./figures/TikZit/lambdatimes.tikz}}%
	\endpgfgraphicnamed
&(2c)&%
	\beginpgfgraphicnamed{TikZit/gpiintriangles}
	\InputIfFileExists{TikZit/gpiintriangles.tikz}{}{\input{./figures/TikZit/gpiintriangles.tikz}}%
	\endpgfgraphicnamed
&(2d) \\
		 &&&\\ 
	\beginpgfgraphicnamed{TikZit/trianglepicommute}
	\InputIfFileExists{TikZit/trianglepicommute.tikz}{}{\input{./figures/TikZit/trianglepicommute.tikz}}%
	\endpgfgraphicnamed
 &(2e) &%
	\beginpgfgraphicnamed{TikZit/triangleocopy}
	\InputIfFileExists{TikZit/triangleocopy.tikz}{}{\input{./figures/TikZit/triangleocopy.tikz}}%
	\endpgfgraphicnamed
 &(2f)\\

	\beginpgfgraphicnamed{TikZit/trianglepicopy}
	\InputIfFileExists{TikZit/trianglepicopy.tikz}{}{\input{./figures/TikZit/trianglepicopy.tikz}}%
	\endpgfgraphicnamed
&(2g) & %
	\beginpgfgraphicnamed{TikZit/tr10prime}
	\InputIfFileExists{TikZit/tr10prime.tikz}{}{\input{./figures/TikZit/tr10prime.tikz}}%
	\endpgfgraphicnamed
 &(2h)\\
	\beginpgfgraphicnamed{TikZit/tr5prime}
	\InputIfFileExists{TikZit/tr5prime.tikz}{}{\input{./figures/TikZit/tr5prime.tikz}}%
	\endpgfgraphicnamed
&(2i) & %
	\beginpgfgraphicnamed{TikZit/trianglehopf2}
	\InputIfFileExists{TikZit/trianglehopf2.tikz}{}{\input{./figures/TikZit/trianglehopf2.tikz}}%
	\endpgfgraphicnamed
&(2j)\\
	\beginpgfgraphicnamed{TikZit/2triangleup}
	\InputIfFileExists{TikZit/2triangleup.tikz}{}{\input{./figures/TikZit/2triangleup.tikz}}%
	\endpgfgraphicnamed
&(2k) & %
	\beginpgfgraphicnamed{TikZit/2triangledown}
	\InputIfFileExists{TikZit/2triangledown.tikz}{}{\input{./figures/TikZit/2triangledown.tikz}}%
	\endpgfgraphicnamed
&(2l)\\
	\beginpgfgraphicnamed{TikZit/2triangledeloopnopi}
	\InputIfFileExists{TikZit/2triangledeloopnopi.tikz}{}{\input{./figures/TikZit/2triangledeloopnopi.tikz}}%
	\endpgfgraphicnamed
&(2m)&%
	\beginpgfgraphicnamed{TikZit/TR1314combine}
	\InputIfFileExists{TikZit/TR1314combine.tikz}{}{\input{./figures/TikZit/TR1314combine.tikz}}%
	\endpgfgraphicnamed
 &(2n) \\
	\beginpgfgraphicnamed{TikZit/plusnew}
	\InputIfFileExists{TikZit/plusnew.tikz}{}{\input{./figures/TikZit/plusnew.tikz}}%
	\endpgfgraphicnamed
 &(2o)&&\\
  		\hline
  		\end{array}\]
  	\end{center}
  	
  	\caption{Previous ZX-calculus rules II, where  $\lambda, \lambda_1,  \lambda_2 \geq 0, \alpha, \beta, \gamma \in [0,~2\pi);$ in (2o), $\lambda e^{i\gamma} 
  =\lambda_1 e^{i\alpha}+ \lambda_2 e^{i\beta}$.}\label{figureold2}
  \end{figure}
 \FloatBarrier
    \begin{remark}
  In the rule (2o), the equality  $\lambda e^{i\gamma} 
  =\lambda_1 e^{i\alpha}+ \lambda_2 e^{i\beta}$ is not expressed in terms of trigonometry functions, but if we solve this equation to give the relations between the angles, then we get 
  $\lambda=\sqrt{\lambda_1^2+\lambda_2^2+2\lambda_1\lambda_2\cos(\alpha-\beta)}, ~\gamma=\alpha+\arccos(\frac{\lambda_1+\lambda_2\cos(\alpha-\beta)}{\lambda})$, if $\lambda\neq 0;$   and $\alpha=\beta+k\pi, ~|\lambda_1| = |\lambda_2|,$ $\gamma$ can be any angle, if $\lambda=0$.

        \end{remark}

    \begin{remark}
The main difference between the algebraic axiomatisation in this paper and the previous axiomatisation in  \cite{amarngwanglics} lies in that we use as generator a green box with parameters ranging over complex numbers in the algebraic axiomatisation rather then a yellow box with parameters ranging over non-negative real numbers in the previous axiomatisation. Furthermore, the rules with triangles involved in Figure \ref{figurealgebra}  are simpler and more natural than those with triangles involved as shown in Figure \ref{figureold2}. The usefulness of the new generator green box has been fully shown in  \cite{coeckewang} through the derivation of the (P) rule with the main help from the green box.

 \end{remark}

       \begin{theorem}\label{zxalgcomplete}
       ZX-calculus is complete for pure qubit quantum mechanics with the rules listed in Figure  \ref{figurealgebra}. 
        \end{theorem}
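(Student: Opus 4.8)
The plan is to establish completeness by \emph{interderivability}. Since \cite{amarngwanglics} already proves that the rules in Figure~\ref{figureold1} and Figure~\ref{figureold2} form a complete axiomatisation, it suffices to derive every one of those rules from the algebraic rules of Figure~\ref{figurealgebra}. Combined with the soundness of the algebraic rules noted above, this yields completeness: any diagrammatic equality whose interpretations agree is provable from the old rules, hence from the new ones, and soundness guarantees no unsound equality becomes provable.

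First I would set up the basic graphical infrastructure. The spider-fusion rule (1a) is immediate from (S1); the compact-structure rules (1b), (1c) follow from (S3) and (S2); and the Hadamard rules (1d), (1e), together with the colour-change definition (H), reduce to manipulations of the Hadamard generator. Several of the needed building blocks are already available as the derived Lemmas (Hopf), Lemma~\ref{Invscalar}, (Picp), (Com) and Lemma~\ref{redpigdot}. The bialgebra rules (1g), (1h) coincide with (B1), (B2). The more substantial task is to re-express, purely in terms of the complex green box $R_{Z,a}^{(n,m)}$, the content that the old axiomatisation carried in its non-negative-real $\lambda$-box: this means recovering the $\lambda$-box rules (2a)--(2d) as special cases of green-box identities, exhibiting how an arbitrary phase $e^{i\alpha}$ and an arbitrary scalar $\lambda$ are both realised by suitable choices of the parameter $a$.

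Next I would turn to the block of triangle rules. The elementary ones (2f), (2g) are just (Bas0), (Bas1); the successor and inverse rules (2h)--(2l) and the commutation and cyclicity rules (2e), (2m), (2n) should follow from the algebraic triangle axioms (Suc), (Inv), (Zero), (Pcy), together with the W-state symmetry and associativity rules (Sym), (Aso) and the breaking rule (Brk), reading the last three rules as phase-copy, symmetry and associativity of the W state. The Euler-decomposition content, appearing as (1f) on the old side, is matched by the new (EU) rule and should require only the Hadamard manipulations already assembled.

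The hard part will be the addition rule (2o), which asserts that whenever $\lambda e^{i\gamma}=\lambda_1 e^{i\alpha}+\lambda_2 e^{i\beta}$ the corresponding diagrams are equal. This is precisely the point at which complex-number \emph{addition} must be reflected diagrammatically, and it is the identity for whose sake the W-state rules (Pcy), (Sym), (Aso), (Brk) were included: the strategy is to encode the sum of two green boxes through the W state, use phase-copy (Pcy) to extract the phases, and then collapse the result with (S1) and the bialgebra rules. I expect this derivation, together with the faithful emulation of the $\lambda$-box rules (2a)--(2d) by the complex green box, to be the main obstacle, since the whole novelty of the axiomatisation is the replacement of the real $\lambda$-box by the complex green box, and it is exactly here that the elimination of the trigonometric side conditions has to be made to work.
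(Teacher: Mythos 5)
Your proposal follows essentially the same route as the paper's own proof: completeness is obtained by deriving every rule of the previous complete axiomatisation (Figures~\ref{figureold1} and~\ref{figureold2}) from the algebraic rules of Figure~\ref{figurealgebra}, with the easy correspondences ((1a)--(1h), (2f), (2g), (2n)) disposed of immediately, the $\lambda$-box rules recovered by realising the $\lambda$-box as a green box, and the addition rule (2o) singled out as the substantive step handled through the W-state/triangle machinery (the paper does this via the auxiliary lemma (AD'), split into the cases $b\neq 0$ and $b=0$). The only caveat is that the actual content of the paper's proof lies in the explicit diagrammatic derivations in the appendix, which your plan outlines but does not carry out.
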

 The proof is given in the appendix.


\section{From ZH-calculus to ZX-calculus}
ZH-calculus is another graphical language for quantum computing introduced by Backens and Kissinger  \cite{miriamaleks}. It has found applications in various fields like tensor network \cite{nakons2020} and hypergraph states  \cite{lwaleks2020}. So it would be very useful if we could establish a connection between ZH-calculus and ZX-calculus. In \cite{Wetering}, there are translations established between phase-free ZH-calculus and ZX-calculus, yet there has been no translation found between the full ZH-calculus and ZX-calculus. In this section, with the algebraic axiomatisation of ZX-calculus, we are able to give a simple translation from any ZH diagrams to ZX diagrams with the semantics preserved. Furthermore, we derive all the translated ZH rules within ZX by the algebraic rules given in Figure  \ref{figurealgebra}. Although in principle we know this can be done because of the completeness of ZX-calculus, but that is not a constructive way.  We give the details of all such derivations which is far from trivial especially for the last three ZH rules, which means the translation from ZH diagrams to ZX diagrams alone doesn't guarantee an easy derivation of the ZH rules. By these translation and rule-derivation we have the bonus that any result obtained via ZH-calculus can be transplanted to  ZX-calculus.

The ZH-calculus is also based on a PROP, thus can be presented by generators and rewriting rules. First we list its generators as follows \cite{miriamaleks}:
\begin{table}[!h]
\begin{center} 
\begin{tabular}{|r@{~}r@{~}c@{~}c|r@{~}r@{~}c@{~}c|}
\hline
$Z^{(n,m)}$&$:$&$n\to m$ & %
	\beginpgfgraphicnamed{TikZit/zhwhitespider}
	\InputIfFileExists{TikZit/zhwhitespider.tikz}{}{\input{./figures/TikZit/zhwhitespider.tikz}}%
	\endpgfgraphicnamed
  & $H_{a}^{(n,m)}$&$:$&$n\to m$ & %
	\beginpgfgraphicnamed{TikZit/hspider}
	\InputIfFileExists{TikZit/hspider.tikz}{}{\input{./figures/TikZit/hspider.tikz}}%
	\endpgfgraphicnamed
  \\\hline
   $\mathbb I$&$:$&$1\to 1$&%
	\beginpgfgraphicnamed{TikZit/Id}
	\InputIfFileExists{TikZit/Id.tikz}{}{\input{./figures/TikZit/Id.tikz}}%
	\endpgfgraphicnamed
 &  $\sigma$&$:$&$ 2\to 2$& %
	\beginpgfgraphicnamed{TikZit/swap}
	\InputIfFileExists{TikZit/swap.tikz}{}{\input{./figures/TikZit/swap.tikz}}%
	\endpgfgraphicnamed
\\\hline
   $C_a$&$:$&$ 0\to 2$& %
	\beginpgfgraphicnamed{TikZit/cap}
	\InputIfFileExists{TikZit/cap.tikz}{}{\input{./figures/TikZit/cap.tikz}}%
	\endpgfgraphicnamed
 &$ C_u$&$:$&$ 2\to 0$&%
	\beginpgfgraphicnamed{TikZit/cup}
	\InputIfFileExists{TikZit/cup.tikz}{}{\input{./figures/TikZit/cup.tikz}}%
	\endpgfgraphicnamed
 \\\hline

\end{tabular}\caption{Generators of ZH-calculus,where $m,n\in \mathbb N$, $ a  \in \mathbb C$.} \label{qbzhgenerator}
\end{center}
\end{table}
 \FloatBarrier
     
There is also a standard interpretation $\left\llbracket \cdot \right\rrbracket$ for the ZH diagrams, here we only present the interpretation of the first two generators, as other generators are the same as that of ZX-calculus:
\[
\left\llbracket %
	\beginpgfgraphicnamed{TikZit/zhwhitespider}
	\InputIfFileExists{TikZit/zhwhitespider.tikz}{}{\input{./figures/TikZit/zhwhitespider.tikz}}%
	\endpgfgraphicnamed
 \right\rrbracket=\ket{0}^{\otimes m}\bra{0}^{\otimes n}+\ket{1}^{\otimes m}\bra{1}^{\otimes n},
\]
\[
\left\llbracket %
	\beginpgfgraphicnamed{TikZit/hspider}
	\InputIfFileExists{TikZit/hspider.tikz}{}{\input{./figures/TikZit/hspider.tikz}}%
	\endpgfgraphicnamed
 \right\rrbracket=\sum a^{i_1\cdots i_m j_1\cdots j_n}\ket{i_1\cdots i_m}\bra{j_1\cdots j_n}.
\]     
It is clear that the white spider in ZH-calculus is just the phase free green spider in ZX-calculus.   Thus we can give a semantics-preserving translation    
 $ \left\llbracket \cdot \right\rrbracket_{HX}$  from ZH-calculus to ZX-calculus via the translation of H-box:
 \[
 \left\llbracket %
	\beginpgfgraphicnamed{TikZit/hspider}
	\InputIfFileExists{TikZit/hspider.tikz}{}{\input{./figures/TikZit/hspider.tikz}}%
	\endpgfgraphicnamed
 \right\rrbracket_{HX}
=%
	\beginpgfgraphicnamed{TikZit/zhtozxgenerphasert}
	\InputIfFileExists{TikZit/zhtozxgenerphasert.tikz}{}{\input{./figures/TikZit/zhtozxgenerphasert.tikz}}%
	\endpgfgraphicnamed
 
\]
  
  This translation is totally new. Following  \cite{miriamaleks}, we make the convention that  
 $$ %
	\beginpgfgraphicnamed{TikZit/hspiderconcention}
	\InputIfFileExists{TikZit/hspiderconcention.tikz}{}{\input{./figures/TikZit/hspiderconcention.tikz}}%
	\endpgfgraphicnamed
,$$ then the ZH rules are given as follows:
     
   \begin{figure}[!h]
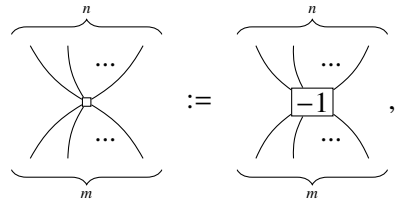
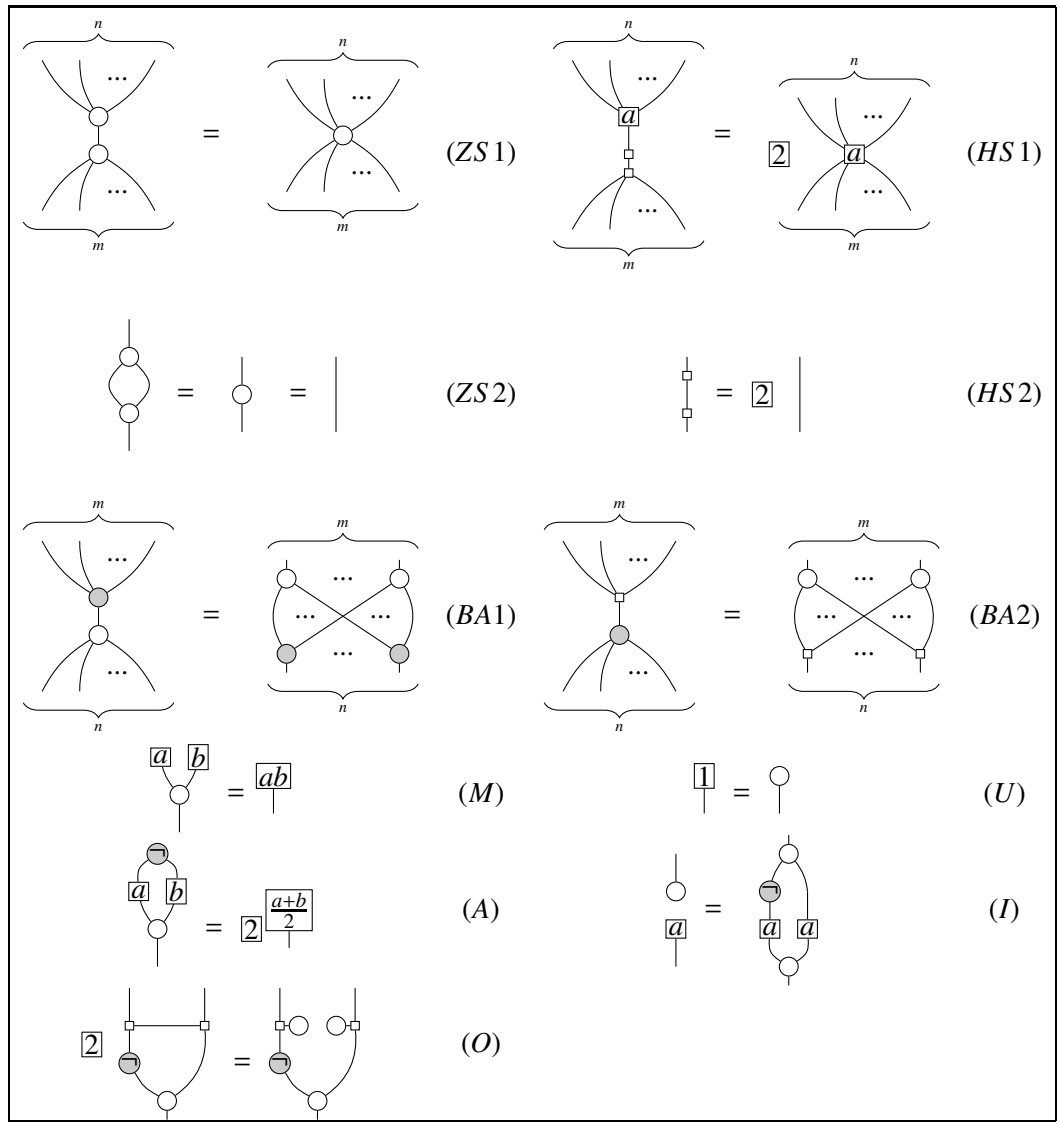

  	\begin{center}
  		\[
  		\quad \qquad\begin{array}{|cccc|}
  		\hline
	\beginpgfgraphicnamed{TikZit/zhzs1}
	\InputIfFileExists{TikZit/zhzs1.tikz}{}{\input{./figures/TikZit/zhzs1.tikz}}%
	\endpgfgraphicnamed
&(ZS1)&%
	\beginpgfgraphicnamed{TikZit/zhhs1}
	\InputIfFileExists{TikZit/zhhs1.tikz}{}{\input{./figures/TikZit/zhhs1.tikz}}%
	\endpgfgraphicnamed
&(HS1) \\
		 &&&\\ 
	\beginpgfgraphicnamed{TikZit/zhzs2}
	\InputIfFileExists{TikZit/zhzs2.tikz}{}{\input{./figures/TikZit/zhzs2.tikz}}%
	\endpgfgraphicnamed
&(ZS2)&%
	\beginpgfgraphicnamed{TikZit/zhhs2}
	\InputIfFileExists{TikZit/zhhs2.tikz}{}{\input{./figures/TikZit/zhhs2.tikz}}%
	\endpgfgraphicnamed
&(HS2) \\
		 &&&\\ 
	\beginpgfgraphicnamed{TikZit/zhba1}
	\InputIfFileExists{TikZit/zhba1.tikz}{}{\input{./figures/TikZit/zhba1.tikz}}%
	\endpgfgraphicnamed
 &(BA1) &%
	\beginpgfgraphicnamed{TikZit/zhba2}
	\InputIfFileExists{TikZit/zhba2.tikz}{}{\input{./figures/TikZit/zhba2.tikz}}%
	\endpgfgraphicnamed
 &(BA2)\\

	\beginpgfgraphicnamed{TikZit/zhm}
	\InputIfFileExists{TikZit/zhm.tikz}{}{\input{./figures/TikZit/zhm.tikz}}%
	\endpgfgraphicnamed
&(M) & %
	\beginpgfgraphicnamed{TikZit/zhu}
	\InputIfFileExists{TikZit/zhu.tikz}{}{\input{./figures/TikZit/zhu.tikz}}%
	\endpgfgraphicnamed
 &(U)\\
	\beginpgfgraphicnamed{TikZit/zha}
	\InputIfFileExists{TikZit/zha.tikz}{}{\input{./figures/TikZit/zha.tikz}}%
	\endpgfgraphicnamed
&(A) & %
	\beginpgfgraphicnamed{TikZit/zhi}
	\InputIfFileExists{TikZit/zhi.tikz}{}{\input{./figures/TikZit/zhi.tikz}}%
	\endpgfgraphicnamed
&(I)\\
	\beginpgfgraphicnamed{TikZit/zho}
	\InputIfFileExists{TikZit/zho.tikz}{}{\input{./figures/TikZit/zho.tikz}}%
	\endpgfgraphicnamed
&(O) &&\\
  		\hline
  		\end{array}\]
  	\end{center}
  	
  	\caption{ZH rules, where  $a, b  \in \mathbb C$.}\label{figurezhrules}
  \end{figure}
 \FloatBarrier   
 
 Now we derive the ZX-translated rules in Figure \ref{figurezhrules} from ZX rules.
 
 The rules (ZS1), (ZS2), (M) and (U) just follow directly from the ZX rules (S1) and (S2).
 The rule (HS2) follows directly the ZX rule (S2) and the definition (H) of red spider. The ZH rule (BA1) is just a generalisation of the ZX rule (B2), which has been proved in ZX papers, for example \cite{duncan_graph_2009}. So we only need to derive the remaining ZH rules (HS1), (A) (I) and (O) individually. This has been shown in the appendix.
 
   Therefore, we have 
     \begin{theorem}\label{zhcomplete} 
     All the ZX-translated ZH rules can be derived from ZX rules.
           \end{theorem}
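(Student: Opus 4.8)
The plan is to push the semantics-preserving translation $\llbracket\cdot\rrbracket_{HX}$ through each of the eleven rules of Figure~\ref{figurezhrules} and, for every one of them, to exhibit a derivation of the resulting ZX equation purely from the algebraic rules of Figure~\ref{figurealgebra}. Since the white ZH spider is literally the phase-free green ZX spider and the translation leaves every generator other than the $H$-box untouched, all the content concentrates in how the translated $H$-box interacts with green spiders, Hadamards, the triangle and the $W$-state gadget. I would therefore first fix once and for all the unfolded form of the translated $H$-box in terms of the green-box generator $R_{Z,a}$, Hadamards and triangles, and then treat the rules in order of increasing difficulty.

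The first group dissolves into spider arithmetic. The rules (ZS1) and (ZS2) are nothing but green-spider fusion and the phase-free spider identity, hence immediate from (S1) and (S2). The multiply rule (M) and the unit rule (U) follow from the same two rules once the shared legs of the translated $H$-boxes are fused: fusing the two green-box generators via (S1) multiplies their parameters, which is exactly the label multiplication asserted by (M). For (HS2) I would unfold the translated $H$-box, absorb the Hadamards using the definition (H) of the red spider, and fuse with (S2). The bialgebra rules (BA1) and (BA2) are the $n$-ary forms of the ZX laws (B1)/(B2) and so reduce to those together with (S1), a reduction already standard in the ZX literature. At this point (ZS1), (ZS2), (HS2), (M), (U), (BA1) and (BA2) are settled.

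The substance of the proof is the derivation of (HS1), (A), (I) and (O), and here the strategy is to expand each translated $H$-box and then rewrite using the $W$-state rules (Pcy), (Sym) and (Aso) together with the triangle/green-box rules (Bas0), (Bas1), (Suc), (Inv) and (Zero), calling on the lemmas (Hopf), (Ivs), (Picp), (Com) and Lemma~\ref{redpigdot} to clean up intermediate diagrams. Of the four, I expect (HS1) to be the most tractable, following by unfolding the translated $H$-boxes and collapsing the resulting green structure with phase copy (Pcy) and (S1); and (I) to follow from cancelling the triangle gadget via (Bas0), (Bas1) and (Inv). The average rule (A) is where the real difficulty begins, because after translation it asserts that two $H$-box labels combine \emph{additively}, and no single algebraic rule of Figure~\ref{figurealgebra} performs addition of two arbitrary complex parameters directly.

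The main obstacle will therefore be the additive content carried by the last three rules (A), (I) and (O), and most acutely by (A). Reproducing $a+b$ diagrammatically forces one to build the addition gadget out of the $W$-state --- using (Pcy) to copy the two phases onto the $W$-legs, (Sym) and (Aso) to rearrange them, and (Suc) to realise the sum --- and it is precisely the green-box generator $R_{Z,a}$ that makes this achievable without reintroducing trigonometry. I anticipate these to be the longest and least mechanical derivations in the paper, passing through several intermediate normal forms with repeated simplification by (Brk) and (Zero). Once the addition gadget is in place, the ortho rule (O) should follow by applying the same machinery to its orthogonality pattern, at which point every translated ZH rule has been derived and the theorem is complete.
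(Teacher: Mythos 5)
You handle the easy group exactly as the paper does: (ZS1), (ZS2), (M) and (U) from (S1)/(S2), (HS2) from (S2) plus the definition (H), and (BA1) as the literature-standard generalisation of the ZX bialgebra rule (B2). The genuine gap is your claim that (BA2) is settled the same way, as an ``$n$-ary form of (B1)/(B2)'' reducible via (S1). It is not: under $\left\llbracket\cdot\right\rrbracket_{HX}$ the multi-legged H-boxes occurring in (BA2) do not translate to red (or green) spiders but to triangle-based gadgets --- the translated H-box is in essence an AND gate followed by a green box, since its matrix has coefficient $a$ exactly on the all-ones entry --- so the translated (BA2) is a bialgebra law between spiders and the AND gadget, which is not a known ZX rule and has to be derived from Figure~\ref{figurealgebra}. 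This is precisely where the paper spends real effort: it proves the binary AND/green-spider bialgebra (Lemma~\ref{andbial}), extends it to arbitrarily many legs by induction (Corollary~\ref{generalbialgebra}), and separately checks that the ZX-translation of the relevant ZH pattern coincides with the ZX AND gadget (equation~(\ref{andzhzx})); only the combination of these three yields (BA2). Your proposal contains no counterpart to any of this, and the reduction you invoke would fail because the translated diagrams are not spider-only.

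The same missing ingredient undercuts your plan for (HS1), (A), (I) and (O). The paper's derivations of all four are organised around the AND gadget and a suite of lemmas about it: its interaction with Hadamard (Lemma~\ref{andgatehadam}), distribution laws (Lemma~\ref{distribute} and Corollary~\ref{distribute2}), a red-$\pi$ bialgebra variant (BiAr), and a parameterised form (Brkp) of the rule (Brk). Your instinct that the additive content of (A) must ultimately come from the W-state/triangle structure is sound --- that is indeed the only source of addition among the algebraic rules --- and your ordering of difficulty is reasonable, but ``expand the H-boxes and rewrite with (Pcy), (Sym), (Aso)'' is not yet an argument: without the AND-gate lemmas there is no way to move the translated H-boxes past spiders at all. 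You also miss that the paper's proof of (A) (Proposition~\ref{zha}) requires a case split $b=0$ versus $b\neq 0$, because the manipulation used in the generic case is only valid for a nonzero parameter; a uniform ``addition gadget'' argument of the kind you describe would break down at $b=0$.
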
  
           
  \begin{remark}
  Obviously,  the completeness of ZX-calculus already implies that all the ZX-translated ZH rules can be derived from ZX rules. However, that is not a constructive proof, so we have no idea on how such derivation really happens. The consequence is that it does no help to ZX-calculus even if there is a great result obtained in the ZH-calculus. In another word, to know the conclusion that all the ZX-translated ZH rules can be derived from ZX rules is not enough, we need to show the details of the derivations. 
  \end{remark}
 \section{Conclusion and further work  }       
   In this paper, we give a purely algebraic axiomatisation of ZX-calculus by introducing new generators. We show the proof of completeness by deriving complete rules established previously. Based one this algebraic axiomatisation, we obtain a simple translation of diagrams from ZH-calculus to ZX-calculus, and derive all the ZX-translated ZH rules within ZX-calculus.
   
   In the next step, we would like to have a proof of completeness based on the algebraic rules presented in this paper via a normal form, rather than a translation from other graphical language. It is also interesting to go for another direction: translate diagrams from ZX-calculus to ZH-calculus and derive the ZX rules within ZH-calculus. Finally, it is worthwhile to apply these algebraic rules to the problem of quantum circuit optimisation.


 
 

 \section*{Acknowledgements} 

This work was supported by AFOSR grant FA2386-18-1-4028.
The author would like to thank Aleks Kissinger for showing him the proof of the distribution rule in ZH-calculus. The author thanks  Bob Coecke,  Niel de Beaudrap and Konstantinos Meichanetzidis for useful discussions on the title of this paper.  

\bibliographystyle{eptcs}
\bibliography{generic}

\section*{Appendix: Propositions, Lemmas and Proofs}
\section*{Proof of Theorem \ref{zxalgcomplete}}
Below we use the rules from Figure  \ref{figurealgebra} to derive the rules in Figure  \ref{figureold1} and  \ref{figureold2} one by one.

First (1a), (1b) and (1c) follow directly from (S1), (S3) and  (S2) respectively. 
(1d) follows clearly from (S2) and the definition of red spider.  (1e) follows from (1d) and (S3). (1f) is just a part of (EU). (1g) and (1h) are exactly (B1) and (B2) respectively.


\begin{proposition}\label{1iprf}
	\beginpgfgraphicnamed{TikZit/k2slt}
	\InputIfFileExists{TikZit/k2slt.tikz}{}{\input{./figures/TikZit/k2slt.tikz}}%
	\endpgfgraphicnamed
=%
	\beginpgfgraphicnamed{TikZit/k2srt}
	\InputIfFileExists{TikZit/k2srt.tikz}{}{\input{./figures/TikZit/k2srt.tikz}}%
	\endpgfgraphicnamed
 (1i) 
   \end{proposition} 
 \begin{proof}
	\beginpgfgraphicnamed{TikZit/k2prf}
	\InputIfFileExists{TikZit/k2prf.tikz}{}{\input{./figures/TikZit/k2prf.tikz}}%
	\endpgfgraphicnamed
  
 (1i) follows when setting $a=e^{i\alpha}$.
 \end{proof}

  \begin{lemma}
	\beginpgfgraphicnamed{TikZit/scalartimes}
	\InputIfFileExists{TikZit/scalartimes.tikz}{}{\input{./figures/TikZit/scalartimes.tikz}}%
	\endpgfgraphicnamed
 (Sca) 
      \end{lemma}    
   \begin{proof}
	\beginpgfgraphicnamed{TikZit/scalartimesprf}
	\InputIfFileExists{TikZit/scalartimesprf.tikz}{}{\input{./figures/TikZit/scalartimesprf.tikz}}%
	\endpgfgraphicnamed
  
   \end{proof}

 \begin{corollary}\label{zeroiscalarempty}
	\beginpgfgraphicnamed{TikZit/zeroscalarempty}
	\InputIfFileExists{TikZit/zeroscalarempty.tikz}{}{\input{./figures/TikZit/zeroscalarempty.tikz}}%
	\endpgfgraphicnamed
  (Zos)
   \end{corollary} 

 \begin{proof}
 Let $a=1$ in (Sca) and use (Ivs) and Lemma \ref{redpigdot}.    
     \end{proof}

  \begin{lemma}
	\beginpgfgraphicnamed{TikZit/scalartimesgeneral}
	\begin{tikzpicture}
	\begin{pgfonlayer}{nodelayer}
		\node [style=gbox] (0) at (-1, 0) {$a$};
		\node [style=gbox] (1) at (-0.5, 0) {$b$};
		\node [style=gbox] (2) at (1, 0) {${\scriptstyle (a+1)(b+1)-1}$};
		\node [style=none] (3) at (0, 0) {$=$};
	\end{pgfonlayer}
\end{tikzpicture}}%
	\endpgfgraphicnamed
 (Sml)  
      \end{lemma}   
      
 \begin{proof}
	\beginpgfgraphicnamed{TikZit/scalartimesgeneralprf}
	\InputIfFileExists{TikZit/scalartimesgeneralprf.tikz}{}{\input{./figures/TikZit/scalartimesgeneralprf.tikz}}%
	\endpgfgraphicnamed
  
   \end{proof}   

  \begin{lemma}
	\beginpgfgraphicnamed{TikZit/1ovsrt2}
	\InputIfFileExists{TikZit/1ovsrt2.tikz}{}{\input{./figures/TikZit/1ovsrt2.tikz}}%
	\endpgfgraphicnamed
 (Irt) 
       \end{lemma}   
 \begin{proof}
	\beginpgfgraphicnamed{TikZit/1ovsrt2prf}
	\InputIfFileExists{TikZit/1ovsrt2prf.tikz}{}{\input{./figures/TikZit/1ovsrt2prf.tikz}}%
	\endpgfgraphicnamed
  
    \end{proof}

   \begin{proposition}\label{1jprf}
	\beginpgfgraphicnamed{TikZit/newemptyrl}
	\InputIfFileExists{TikZit/newemptyrl.tikz}{}{\input{./figures/TikZit/newemptyrl.tikz}}%
	\endpgfgraphicnamed
 (1j) 
   \end{proposition}   
      \begin{proof}
	\beginpgfgraphicnamed{TikZit/1jprf}
	\InputIfFileExists{TikZit/1jprf.tikz}{}{\input{./figures/TikZit/1jprf.tikz}}%
	\endpgfgraphicnamed
  
 \end{proof}
     
 Note that     %
	\beginpgfgraphicnamed{TikZit/lambdabxequalgbx}
	\InputIfFileExists{TikZit/lambdabxequalgbx.tikz}{}{\input{./figures/TikZit/lambdabxequalgbx.tikz}}%
	\endpgfgraphicnamed
.  Then (2a) and (2c) follow directly from (S1). (2b) directly follows  from (S2).

   \begin{proposition}\label{2eprf}
	\beginpgfgraphicnamed{TikZit/zx2e}
	\InputIfFileExists{TikZit/zx2e.tikz}{}{\input{./figures/TikZit/zx2e.tikz}}%
	\endpgfgraphicnamed
 (2e) 
    \end{proposition}   
  \begin{proof}
$$ %
	\beginpgfgraphicnamed{TikZit/tr8primeprf3}
	\InputIfFileExists{TikZit/tr8primeprf3.tikz}{}{\input{./figures/TikZit/tr8primeprf3.tikz}}%
	\endpgfgraphicnamed
  $$
    \end{proof}   
     
(2f) and (2g) are exactly (Bas0) and (Bas1) respectively. 

  \begin{lemma}
	\beginpgfgraphicnamed{TikZit/redpitogreen}
	\InputIfFileExists{TikZit/redpitogreen.tikz}{}{\input{./figures/TikZit/redpitogreen.tikz}}%
	\endpgfgraphicnamed
 (Bas1')
  \end{lemma}   
This can be directly obtained by plugging a triangle inverse on both sides of (Bas1).

   \begin{proposition}\label{2iprf}
	\beginpgfgraphicnamed{TikZit/tr5prime}
	\InputIfFileExists{TikZit/tr5prime.tikz}{}{\input{./figures/TikZit/tr5prime.tikz}}%
	\endpgfgraphicnamed
 (2i) 
    \end{proposition}  
  \begin{proof}
$$ %
	\beginpgfgraphicnamed{TikZit/tr5primeprf}
	\InputIfFileExists{TikZit/tr5primeprf.tikz}{}{\input{./figures/TikZit/tr5primeprf.tikz}}%
	\endpgfgraphicnamed
  $$
    \end{proof}

  \begin{proposition}\label{2jprf}
	\beginpgfgraphicnamed{TikZit/trianglehopf}
	\InputIfFileExists{TikZit/trianglehopf.tikz}{}{\input{./figures/TikZit/trianglehopf.tikz}}%
	\endpgfgraphicnamed
 (2j) 
   \end{proposition}  
   
   \begin{proof}
$$ %
	\beginpgfgraphicnamed{TikZit/Hopftrprf}
	\InputIfFileExists{TikZit/Hopftrprf.tikz}{}{\input{./figures/TikZit/Hopftrprf.tikz}}%
	\endpgfgraphicnamed
  $$
        \end{proof} 
 
  \begin{corollary}
 \begin{equation}\label{trianglehopflip}
	\beginpgfgraphicnamed{TikZit/trianglehopfflip}
	\InputIfFileExists{TikZit/trianglehopfflip.tikz}{}{\input{./figures/TikZit/trianglehopfflip.tikz}}%
	\endpgfgraphicnamed
 
   \end{equation} 
    \end{corollary}
  \begin{proof}
 $$%
	\beginpgfgraphicnamed{TikZit/trianglehopfflipprf}
	\InputIfFileExists{TikZit/trianglehopfflipprf.tikz}{}{\input{./figures/TikZit/trianglehopfflipprf.tikz}}%
	\endpgfgraphicnamed
  $$
   \end{proof}   

 \begin{proposition}\label{2kprf}
	\beginpgfgraphicnamed{TikZit/2triangleup}
	\InputIfFileExists{TikZit/2triangleup.tikz}{}{\input{./figures/TikZit/2triangleup.tikz}}%
	\endpgfgraphicnamed
 (2k) 
   \end{proposition}  
   
 \begin{proof}
$$  %
	\beginpgfgraphicnamed{TikZit/2triangleupprf}
	\InputIfFileExists{TikZit/2triangleupprf.tikz}{}{\input{./figures/TikZit/2triangleupprf.tikz}}%
	\endpgfgraphicnamed
$$
  \end{proof}

 \begin{proposition}\label{2mprf}
	\beginpgfgraphicnamed{TikZit/2triangledeloopnopi}
	\InputIfFileExists{TikZit/2triangledeloopnopi.tikz}{}{\input{./figures/TikZit/2triangledeloopnopi.tikz}}%
	\endpgfgraphicnamed
 (2m)
 \end{proposition}  
 
  \begin{proof}
 $$%
	\beginpgfgraphicnamed{TikZit/brk1prf}
	\InputIfFileExists{TikZit/brk1prf.tikz}{}{\input{./figures/TikZit/brk1prf.tikz}}%
	\endpgfgraphicnamed
  $$
   \end{proof}   
 
  \begin{corollary}
  \begin{equation}
	\beginpgfgraphicnamed{TikZit/2triangledeloopnopiflip}
	\InputIfFileExists{TikZit/2triangledeloopnopiflip.tikz}{}{\input{./figures/TikZit/2triangledeloopnopiflip.tikz}}%
	\endpgfgraphicnamed
 \quad (Brk1') 
   \end{equation}  
    \end{corollary}  
     \begin{proof}
 $$%
	\beginpgfgraphicnamed{TikZit/brk1primeprf}
	\InputIfFileExists{TikZit/brk1primeprf.tikz}{}{\input{./figures/TikZit/brk1primeprf.tikz}}%
	\endpgfgraphicnamed
  $$
 The other part can be proved by symmetry. 
   \end{proof}   

(2n) is just the rule (Pcy).

 \begin{lemma}
	\beginpgfgraphicnamed{TikZit/trianglegdpicopy}
	\InputIfFileExists{TikZit/trianglegdpicopy.tikz}{}{\input{./figures/TikZit/trianglegdpicopy.tikz}}%
	\endpgfgraphicnamed
 (Zrp)
    \end{lemma}
  By the rule (Suc), it is clear that (Zrp) is equivalent to the rule (Zero).
 
 \begin{lemma}
	\beginpgfgraphicnamed{TikZit/zerodecom}
	\InputIfFileExists{TikZit/zerodecom.tikz}{}{\input{./figures/TikZit/zerodecom.tikz}}%
	\endpgfgraphicnamed
 (Zero') 
  \end{lemma}
 \begin{proof}
$$ %
	\beginpgfgraphicnamed{TikZit/zerodecomprf2}
	\InputIfFileExists{TikZit/zerodecomprf2.tikz}{}{\input{./figures/TikZit/zerodecomprf2.tikz}}%
	\endpgfgraphicnamed
  $$
     \end{proof}

 \begin{lemma}
	\beginpgfgraphicnamed{TikZit/tr3equivl}
	\InputIfFileExists{TikZit/tr3equivl.tikz}{}{\input{./figures/TikZit/tr3equivl.tikz}}%
	\endpgfgraphicnamed
 (Bas0')
   \end{lemma}
 \begin{proof}
$$ %
	\beginpgfgraphicnamed{TikZit/tr2primeprf}
	\InputIfFileExists{TikZit/tr2primeprf.tikz}{}{\input{./figures/TikZit/tr2primeprf.tikz}}%
	\endpgfgraphicnamed
  $$
    \end{proof}

 \begin{lemma}
  \begin{equation}\label{TR4g}
	\beginpgfgraphicnamed{TikZit/tr4g}
	\InputIfFileExists{TikZit/tr4g.tikz}{}{\input{./figures/TikZit/tr4g.tikz}}%
	\endpgfgraphicnamed
 
   \end{equation}
    \end{lemma}
 \begin{proof}
 $$%
	\beginpgfgraphicnamed{TikZit/tr4gprf}
	\InputIfFileExists{TikZit/tr4gprf.tikz}{}{\input{./figures/TikZit/tr4gprf.tikz}}%
	\endpgfgraphicnamed
$$  
  \end{proof}  

  \begin{lemma}
  \begin{equation}\label{Hopfgtr}
	\beginpgfgraphicnamed{TikZit/trianglehopfgreen}
	\InputIfFileExists{TikZit/trianglehopfgreen.tikz}{}{\input{./figures/TikZit/trianglehopfgreen.tikz}}%
	\endpgfgraphicnamed

  \end{equation}
    \end{lemma}
 \begin{proof}
$$ %
	\beginpgfgraphicnamed{TikZit/trianglehopfgreenprf}
	\InputIfFileExists{TikZit/trianglehopfgreenprf.tikz}{}{\input{./figures/TikZit/trianglehopfgreenprf.tikz}}%
	\endpgfgraphicnamed
  $$
  \end{proof}

   \begin{lemma}
  \begin{equation}\label{TR19}
	\beginpgfgraphicnamed{TikZit/trianglecopylr}
	\InputIfFileExists{TikZit/trianglecopylr.tikz}{}{\input{./figures/TikZit/trianglecopylr.tikz}}%
	\endpgfgraphicnamed

  \end{equation}
    \end{lemma}
 \begin{proof}
$$ %
	\beginpgfgraphicnamed{TikZit/trianglecopylrprf}
	\InputIfFileExists{TikZit/trianglecopylrprf.tikz}{}{\input{./figures/TikZit/trianglecopylrprf.tikz}}%
	\endpgfgraphicnamed
  $$
   \end{proof}

  \begin{lemma}
   \begin{equation*}
	\beginpgfgraphicnamed{TikZit/equivalentaddrule}
	\InputIfFileExists{TikZit/equivalentaddrule.tikz}{}{\input{./figures/TikZit/equivalentaddrule.tikz}}%
	\endpgfgraphicnamed
 \quad (AD') 
   \end{equation*}    
     \end{lemma}
    \begin{proof}
    If  $b\neq 0$, then 
$$%
	\beginpgfgraphicnamed{TikZit/equivalentaddruleprf}
	\InputIfFileExists{TikZit/equivalentaddruleprf.tikz}{}{\input{./figures/TikZit/equivalentaddruleprf.tikz}}%
	\endpgfgraphicnamed
  $$
 If $b= 0$, then
$$%
	\beginpgfgraphicnamed{TikZit/equivalentaddruleprf0}
	\InputIfFileExists{TikZit/equivalentaddruleprf0.tikz}{}{\input{./figures/TikZit/equivalentaddruleprf0.tikz}}%
	\endpgfgraphicnamed
  $$
   \end{proof}   
 
 \begin{proposition}
	\beginpgfgraphicnamed{TikZit/additiongbx}
	\InputIfFileExists{TikZit/additiongbx.tikz}{}{\input{./figures/TikZit/additiongbx.tikz}}%
	\endpgfgraphicnamed
 (2o) 
  \end{proposition}
  \begin{proof}
$$ %
	\beginpgfgraphicnamed{TikZit/addprf}
	\InputIfFileExists{TikZit/addprf.tikz}{}{\input{./figures/TikZit/addprf.tikz}}%
	\endpgfgraphicnamed
  $$
   \end{proof}

 \begin{proposition}
	\beginpgfgraphicnamed{TikZit/tr10prime}
	\InputIfFileExists{TikZit/tr10prime.tikz}{}{\input{./figures/TikZit/tr10prime.tikz}}%
	\endpgfgraphicnamed
 (2h) 
   \end{proposition}
  \begin{proof}
$$ %
	\beginpgfgraphicnamed{TikZit/invtriprf}
	\InputIfFileExists{TikZit/invtriprf.tikz}{}{\input{./figures/TikZit/invtriprf.tikz}}%
	\endpgfgraphicnamed
  $$
(2h) follows immediately. 
    \end{proof}   
Clearly, (2h) is equivalent to  %
	\beginpgfgraphicnamed{TikZit/definitionTriangleInverse2}
	\InputIfFileExists{TikZit/definitionTriangleInverse2.tikz}{}{\input{./figures/TikZit/definitionTriangleInverse2.tikz}}%
	\endpgfgraphicnamed
 (IVT).      
     
 \begin{corollary}
   \begin{equation}\label{pitinvcomut}
	\beginpgfgraphicnamed{TikZit/rpitrinverse}
	\InputIfFileExists{TikZit/rpitrinverse.tikz}{}{\input{./figures/TikZit/rpitrinverse.tikz}}%
	\endpgfgraphicnamed
 
     \end{equation}
 \end{corollary}
  \begin{proof}
$$ %
	\beginpgfgraphicnamed{TikZit/rpitrinverseprf}
	\InputIfFileExists{TikZit/rpitrinverseprf.tikz}{}{\input{./figures/TikZit/rpitrinverseprf.tikz}}%
	\endpgfgraphicnamed
  $$
    \end{proof}

 \begin{proposition}
	\beginpgfgraphicnamed{TikZit/gpiintriangles}
	\InputIfFileExists{TikZit/gpiintriangles.tikz}{}{\input{./figures/TikZit/gpiintriangles.tikz}}%
	\endpgfgraphicnamed
 (2d)  
     \end{proposition}
  \begin{proof}
$$ %
	\beginpgfgraphicnamed{TikZit/gpiintrianglesprf}
	\InputIfFileExists{TikZit/gpiintrianglesprf.tikz}{}{\input{./figures/TikZit/gpiintrianglesprf.tikz}}%
	\endpgfgraphicnamed
  $$
    \end{proof}  

   \begin{lemma}
  \begin{equation}\label{TRPh}
	\beginpgfgraphicnamed{TikZit/trianglephase}
	\InputIfFileExists{TikZit/trianglephase.tikz}{}{\input{./figures/TikZit/trianglephase.tikz}}%
	\endpgfgraphicnamed
 
   \end{equation}
    \end{lemma}
 \begin{proof}
$$ %
	\beginpgfgraphicnamed{TikZit/trianglephaseprf}
	\InputIfFileExists{TikZit/trianglephaseprf.tikz}{}{\input{./figures/TikZit/trianglephaseprf.tikz}}%
	\endpgfgraphicnamed
  $$
       \end{proof}

  \begin{lemma}
	\beginpgfgraphicnamed{TikZit/h2triandecom2}
	\InputIfFileExists{TikZit/h2triandecom2.tikz}{}{\input{./figures/TikZit/h2triandecom2.tikz}}%
	\endpgfgraphicnamed
 (H2)  
   \end{lemma}
 \begin{proof}
$$ %
	\beginpgfgraphicnamed{TikZit/h2triandecom2prf}
	\InputIfFileExists{TikZit/h2triandecom2prf.tikz}{}{\input{./figures/TikZit/h2triandecom2prf.tikz}}%
	\endpgfgraphicnamed
$$  
         \end{proof}

   \begin{lemma}
  \begin{equation}\label{TRH}
	\beginpgfgraphicnamed{TikZit/trihbx2}
	\InputIfFileExists{TikZit/trihbx2.tikz}{}{\input{./figures/TikZit/trihbx2.tikz}}%
	\endpgfgraphicnamed
 
   \end{equation}
    \end{lemma}
  \begin{proof}
 $$%
	\beginpgfgraphicnamed{TikZit/trihbx2prf}
	\InputIfFileExists{TikZit/trihbx2prf.tikz}{}{\input{./figures/TikZit/trihbx2prf.tikz}}%
	\endpgfgraphicnamed
  $$
        \end{proof}

    \begin{proposition}
	\beginpgfgraphicnamed{TikZit/2triangledown}
	\InputIfFileExists{TikZit/2triangledown.tikz}{}{\input{./figures/TikZit/2triangledown.tikz}}%
	\endpgfgraphicnamed
 (2l) 
   \end{proposition}
  \begin{proof}
 $$%
	\beginpgfgraphicnamed{TikZit/2triangledownproof2}
	\InputIfFileExists{TikZit/2triangledownproof2.tikz}{}{\input{./figures/TikZit/2triangledownproof2.tikz}}%
	\endpgfgraphicnamed
  $$
      \end{proof}   

\section*{Proof of Theorem \ref{zhcomplete}}

 For simplicity, we use the following notation.
  $$  %
	\beginpgfgraphicnamed{TikZit/andshortnote}
	\InputIfFileExists{TikZit/andshortnote.tikz}{}{\input{./figures/TikZit/andshortnote.tikz}}%
	\endpgfgraphicnamed
 $$

   \begin{lemma}\label{andgate2v}
	\beginpgfgraphicnamed{TikZit/andgate2vs}
	\InputIfFileExists{TikZit/andgate2vs.tikz}{}{\input{./figures/TikZit/andgate2vs.tikz}}%
	\endpgfgraphicnamed
 
 \end{lemma}
   \begin{proof}
   $$  %
	\beginpgfgraphicnamed{TikZit/andgate2vsprf}
	\InputIfFileExists{TikZit/andgate2vsprf.tikz}{}{\input{./figures/TikZit/andgate2vsprf.tikz}}%
	\endpgfgraphicnamed
 $$
  \end{proof}

   \begin{lemma}\label{andbial}
	\beginpgfgraphicnamed{TikZit/ruleA3}
	\InputIfFileExists{TikZit/ruleA3.tikz}{}{\input{./figures/TikZit/ruleA3.tikz}}%
	\endpgfgraphicnamed
 (BiA)
 \end{lemma}
  
   \begin{proof}
 $$  %
	\beginpgfgraphicnamed{TikZit/andbialgebraprf}
	\InputIfFileExists{TikZit/andbialgebraprf.tikz}{}{\input{./figures/TikZit/andbialgebraprf.tikz}}%
	\endpgfgraphicnamed
 $$
  \end{proof}
  
    \begin{corollary}\label{generalbialgebra}
	\beginpgfgraphicnamed{TikZit/generalBiA}
	\InputIfFileExists{TikZit/generalBiA.tikz}{}{\input{./figures/TikZit/generalBiA.tikz}}%
	\endpgfgraphicnamed
 
 \end{corollary}
  This can be proved by induction, see (L1) in \cite{bobanthonywang}.

    Note that 
   \begin{equation}\label{andzhzx}
   \left\llbracket~
	\beginpgfgraphicnamed{TikZit/andzh}
	\InputIfFileExists{TikZit/andzh.tikz}{}{\input{./figures/TikZit/andzh.tikz}}%
	\endpgfgraphicnamed
 
	~\right\rrbracket_{HX}= %
	\beginpgfgraphicnamed{TikZit/andzx}
	\InputIfFileExists{TikZit/andzx.tikz}{}{\input{./figures/TikZit/andzx.tikz}}%
	\endpgfgraphicnamed
 
 \end{equation}
    \begin{proof}
   $$  \left\llbracket~
	\beginpgfgraphicnamed{TikZit/andzh}
	\InputIfFileExists{TikZit/andzh.tikz}{}{\input{./figures/TikZit/andzh.tikz}}%
	\endpgfgraphicnamed
 
	~\right\rrbracket_{HX}= %
	\beginpgfgraphicnamed{TikZit/andzhzxprf}
	\InputIfFileExists{TikZit/andzhzxprf.tikz}{}{\input{./figures/TikZit/andzhzxprf.tikz}}%
	\endpgfgraphicnamed
 $$
  \end{proof}

 Then  it is clear that the ZH rule (BA2) follows directly from equation \ref{andzhzx} and 
  Corollary \ref{generalbialgebra}.

   \begin{lemma}\label{andgatehadam}
	\beginpgfgraphicnamed{TikZit/andgatehadama}
	\InputIfFileExists{TikZit/andgatehadama.tikz}{}{\input{./figures/TikZit/andgatehadama.tikz}}%
	\endpgfgraphicnamed
 
 \end{lemma}
  \begin{proof}
 $$  %
	\beginpgfgraphicnamed{TikZit/andgatehadamaprf}
	\InputIfFileExists{TikZit/andgatehadamaprf.tikz}{}{\input{./figures/TikZit/andgatehadamaprf.tikz}}%
	\endpgfgraphicnamed
 $$
 The other equality can be obtained by symmetry. 
  \end{proof}
  
 \begin{lemma}\label{distribute}
	\beginpgfgraphicnamed{TikZit/ruleA1_Draft}
	\InputIfFileExists{TikZit/ruleA1_Draft.tikz}{}{\input{./figures/TikZit/ruleA1_Draft.tikz}}%
	\endpgfgraphicnamed
 (Dis)
 \end{lemma}
 \begin{proof}
 $$  %
	\beginpgfgraphicnamed{TikZit/distributionprf}
	\InputIfFileExists{TikZit/distributionprf.tikz}{}{\input{./figures/TikZit/distributionprf.tikz}}%
	\endpgfgraphicnamed
 $$
  \end{proof}
  
   \begin{corollary}\label{distribute2}
	\beginpgfgraphicnamed{TikZit/distribute2}
	\InputIfFileExists{TikZit/distribute2.tikz}{}{\input{./figures/TikZit/distribute2.tikz}}%
	\endpgfgraphicnamed
 
 \end{corollary}
  \begin{proof}
 $$  %
	\beginpgfgraphicnamed{TikZit/distribute2prf}
	\InputIfFileExists{TikZit/distribute2prf.tikz}{}{\input{./figures/TikZit/distribute2prf.tikz}}%
	\endpgfgraphicnamed
 $$
  \end{proof}

    \begin{lemma}
	\beginpgfgraphicnamed{TikZit/andbialgebraredpi}
	\InputIfFileExists{TikZit/andbialgebraredpi.tikz}{}{\input{./figures/TikZit/andbialgebraredpi.tikz}}%
	\endpgfgraphicnamed
 (BiAr)
   \end{lemma} 
    \begin{proof}
   \[  %
	\beginpgfgraphicnamed{TikZit/andbialgebraredpiprf}
	\InputIfFileExists{TikZit/andbialgebraredpiprf.tikz}{}{\input{./figures/TikZit/andbialgebraredpiprf.tikz}}%
	\endpgfgraphicnamed
 \]
      \end{proof}  
      
     \begin{proposition} (Derivation of  (HS1)) \label{zhs1}
 $$ ZX\vdash
	\left\llbracket~
	\beginpgfgraphicnamed{TikZit/zhhs1}
	\InputIfFileExists{TikZit/zhhs1.tikz}{}{\input{./figures/TikZit/zhhs1.tikz}}%
	\endpgfgraphicnamed
 
	~\right\rrbracket_{HX}$$
 \end{proposition}         
  \begin{proof}
   \[  %
	\beginpgfgraphicnamed{TikZit/zhhs1prf}
	\InputIfFileExists{TikZit/zhhs1prf.tikz}{}{\input{./figures/TikZit/zhhs1prf.tikz}}%
	\endpgfgraphicnamed
 \]
      \end{proof}

      \begin{lemma}
	\beginpgfgraphicnamed{TikZit/anddflipwitha2}
	\InputIfFileExists{TikZit/anddflipwitha2.tikz}{}{\input{./figures/TikZit/anddflipwitha2.tikz}}%
	\endpgfgraphicnamed
 (Brkp)      
   \end{lemma} 
   \begin{proof}
   \[  %
	\beginpgfgraphicnamed{TikZit/anddflipwitha2prf}
	\InputIfFileExists{TikZit/anddflipwitha2prf.tikz}{}{\input{./figures/TikZit/anddflipwitha2prf.tikz}}%
	\endpgfgraphicnamed
 \]
   Therefore,
      \[  %
	\beginpgfgraphicnamed{TikZit/anddflipwitha2prf2}
	\InputIfFileExists{TikZit/anddflipwitha2prf2.tikz}{}{\input{./figures/TikZit/anddflipwitha2prf2.tikz}}%
	\endpgfgraphicnamed
 \]
     \end{proof}   
     
    \begin{proposition}(Derivation of  (A))\label{zha}
   $$ ZX\vdash
	\left\llbracket~
	\beginpgfgraphicnamed{TikZit/zha}
	\InputIfFileExists{TikZit/zha.tikz}{}{\input{./figures/TikZit/zha.tikz}}%
	\endpgfgraphicnamed
 
	~\right\rrbracket_{HX}$$
 \end{proposition}    
      \begin{proof}
      First we have 
        \[  %
	\beginpgfgraphicnamed{TikZit/zhaprfb}
	\InputIfFileExists{TikZit/zhaprfb.tikz}{}{\input{./figures/TikZit/zhaprfb.tikz}}%
	\endpgfgraphicnamed
 \]
      If $b=0$, then
        \[  %
	\beginpgfgraphicnamed{TikZit/zhaprf0}
	\InputIfFileExists{TikZit/zhaprf0.tikz}{}{\input{./figures/TikZit/zhaprf0.tikz}}%
	\endpgfgraphicnamed
 \]
      
     If $b\neq 0$, then we have
   \[  %
	\beginpgfgraphicnamed{TikZit/zhaprf1}
	\InputIfFileExists{TikZit/zhaprf1.tikz}{}{\input{./figures/TikZit/zhaprf1.tikz}}%
	\endpgfgraphicnamed
 \]
      \end{proof}  
     
       \begin{proposition}(Derivation of  (I))\label{zhi}
   $$ ZX\vdash
	\left\llbracket~
	\beginpgfgraphicnamed{TikZit/zhi}
	\InputIfFileExists{TikZit/zhi.tikz}{}{\input{./figures/TikZit/zhi.tikz}}%
	\endpgfgraphicnamed
 
	~\right\rrbracket_{HX}$$
   \end{proposition}    
         \begin{proof}
   \[  %
	\beginpgfgraphicnamed{TikZit/zhiprf}
	\InputIfFileExists{TikZit/zhiprf.tikz}{}{\input{./figures/TikZit/zhiprf.tikz}}%
	\endpgfgraphicnamed
 \]
      \end{proof}  
     
       \begin{proposition}(Derivation of  (O))\label{zho}
   $$ ZX\vdash
	\left\llbracket~
	\beginpgfgraphicnamed{TikZit/zho}
	\InputIfFileExists{TikZit/zho.tikz}{}{\input{./figures/TikZit/zho.tikz}}%
	\endpgfgraphicnamed
 
	~\right\rrbracket_{HX}$$
   \end{proposition}  
       \begin{proof}
   \[  %
	\beginpgfgraphicnamed{TikZit/zhoprf}
	\InputIfFileExists{TikZit/zhoprf.tikz}{}{\input{./figures/TikZit/zhoprf.tikz}}%
	\endpgfgraphicnamed
 \]
      \end{proof}  
     

\end{document}